\newcommand{\RR}{\text{R}}      
\newcommand{\ZZ}{\text{Z}}
\newcommand{\lc}{\left\{}  
\newcommand{\rc}{\right\}}
\newcommand{\lb}{\left(}  
\newcommand{\rb}{\right)}
\begin{document}

\title{Network bargaining with general capacities}

\author{Linda Farczadi\inst{1},  Konstantinos Georgiou \inst{1}, Jochen K\"{o}nemann \inst{1}}

\institute{University of Waterloo, Waterloo, Canada}

\maketitle

\begin{abstract}
We study \emph{balanced solutions} for \emph{network bargaining games} with general capacities, where agents can participate in a fixed but arbitrary number of contracts. We provide the first polynomial time algorithm for computing balanced solutions for these games. In addition, we prove that an instance has a balanced solution if and only if it has a stable one. Our methods use a new idea of reducing an instance with general capacities to a network bargaining game with unit capacities defined on an auxiliary graph. This represents a departure from previous approaches, which rely on computing an allocation in the intersection of the \emph{core} and \emph{prekernel} of a corresponding \emph{cooperative game}, and then proving that the solution corresponding to this allocation is balanced. In fact, we show that such cooperative game methods do not extend to general capacity games, since  contrary to the case of unit capacities, there exist allocations in the intersection of the core and prekernel with no corresponding balanced solution. Finally, we identify two sufficient conditions under which the set of balanced solutions corresponds to the intersection of the \emph{core} and \emph{prekernel}, thereby extending the class of games for which this result was previously  known.
\end{abstract}

\section{Introduction}

Exchanges in networks have been studied for a long time in both sociology and economics. In sociology, they appear under the name of \emph{network exchange theory}, a field which studies the behaviour of agents who interact across a network to form bilateral relationships of mutual benefit. The goal is to determine how an agent's location in the network influences its ability to negotiate for resources \cite{cook1992power}. In economics, they are known as \emph{cooperative games} and have been used for studying the distribution of resources across a network, for example in the case of  two-sided markets \cite{shapley1971assignment} \cite{rochford1984symmetrically}.

From a theoretical perspective the most commonly used framework for studying such exchanges is that of \emph{network bargaining games}. The model consists of an undirected graph $G=(V,E)$ with edge weights $w: E(G) \rightarrow \RR_{+}$ and vertex capacities $c: V(G) \rightarrow \ZZ_+$. The vertices represent the agents, and the edges represent possible pairwise contracts that the agents can form.  The weight of each edge represents the value of the corresponding contract. If a contract is formed between two vertices, its value is divided between them,  whereas if the contract is not formed neither vertex receives any profit from this specific contract. The capacity of each agent limits the number of contracts it can form. This constraint, together with an agent's position in the network determine its bargaining power. 

A \emph{solution} for the network bargaining model specifies the set of contracts which are formed, and how each contract is divided. Specifically, a solution consists of a pair $(M,z)$, where $M$ is a $c$-matching of the underlying graph $G$, and $z$ is a vector which assigns each edge $uv$  two values $z_{uv},z_{vu} \geq 0$ corresponding to the profit that agent $u$, respectively agent $v$, earn from the contract $uv$. To be a valid solution, the two values $z_{uv}$ and $z_{vu}$ must add up to the value of the contract whenever the edge $uv$ belongs to the $c$-matching $M$, and must be zero otherwise. 

Solutions to network bargaining games are classified according to two main concepts: \emph{stability} and \emph{balance}. A solution is stable if the profit an agent earns from any formed contract is at least as much as its outside option. An agent's \emph{outside option}, in this context, refers to the  maximum  profit that the agent can rationally receive by forming a new contract with one of its neighbours, under the condition that the newly formed contract would benefit both parties. The notion of balance, first introduced in \cite{cook1983distribution}, \cite{rochford1984symmetrically}, is a generalization of the Nash bargaining solution to the network setting.  Specifically, in a balanced solution the value of each contract is split according to the following rule: both endpoints must earn their outside options, and any surplus is to be divided equally among them. Balanced solutions have been shown to agree with experimental evidence, even to the point of picking up on subtle differences in bargaining power among agents \cite{willer1999network}. This is an affirmation of the fact that these solutions are natural and represent an important area of study. 

There is a close connection between network bargaining games and \emph{cooperative games}. Specifically given a solution $(M,z)$  to the network bargaining game $(G,w,c)$ we can define a corresponding payoff vector $x$, where $x_u$ is just the total profit earned by vertex $u$ from all its contracts in the solution $(M,z)$. Then this vector $x$ can be seen as a solution to a corresponding  cooperative game $(N,v)$ defined as follows: we let $N = V(G)$ denote the set of players, and for every subset $S \subseteq N$ of players,  we define its \emph{value $\nu(S)$} as the weight of the maximum weight $c$-matching of $G[S]$. This game is also known as the \emph{matching game}  \cite{shapley1971assignment}. The subsets $S \subseteq N$ are referred to as \emph{coalitions}, and the value $\nu(S)$ of each coalition is interpreted as the collective payoff that the players in $S$ would receive if they decide to cooperate. The players are assumed to be able choose which coalitions to form, and their objective is to maximize their payoffs. 

The underlying assumption in cooperative game theory is that the grand coalition $N$ will form, and the question becomes how to distribute the payoff $\nu(N)$ among the players. A vector $x$  describing such a distribution is referred to as an \emph{allocation}. Given an allocation $x$, the excess of a coalition $S$ is defined as $\nu(S) - x(S)$. Intuitively, a negative excess means that the players of the coalition have no incentive to break away from the grand coalition, since the collective payoff they could achieve by forming the new coalition is less then what they are currently receiving. The  \emph{power} of player $u$ over player $v$ with respect to the allocation $x$ is the maximum excess achieved by a coalition which includes  $u$ but excludes $v$. Two important concepts from cooperative games are those of the \emph{core} and \emph{prekernel}. The core consists of  allocations for which no coalition has a negative excess, whereas the prekernel consists of all allocations with symmetric powers.

\textbf{Our contribution and results.} Our main result is providing the first polynomial time algorithm for computing balanced solutions for network bargaining games with general capacities and fully characterizating the  existence of balanced solutions for these games. Specifically we show the following results in sections \ref{sec1} and \ref{sec2} respectively:

\textsc{Result 1.} \textit{There exists a polynomial time algorithm which given an instance of a network bargaining game with general capacities and a maximum weight $c$-matching $M$,  computes a balanced solution  $(M,z)$ whenever one exists.}

\textsc{Result 2.} \textit{A network bargaining game with general capacities has a balanced solution if and only if it has a stable one.}

Our method relies on a new approach of reducing a general capacity instance to a network bargaining game with unit capacities defined on an auxiliary graph. This allows us to use existing algorithms for obtaining balanced solutions for unit capacities games, which we can then transform to  balanced solutions of our original instance. This represents a departure from previous approaches of \cite{bateni2010cooperative} which relied on proving an equivalence between the set of balanced solutions and the intersection of the core and prekernel of the corresponding matching game. In section  \ref{sec3} we show that such an approach cannot work for our case, since this equivalence does not extend to all instances of general capacity games:

\textsc{Result 3.} \textit{There exists an instance of a network bargaining game with general capacities for which we can find an allocation in the intersection of the core and  prekernel such that there is no corresponding balanced solution for this allocation}.

Despite this result, we provide two necessary conditions which ensure that the correspondence between the set of balanced solutions and allocations in the intersection of the core and prekernel is maintained. Using the definition of \emph{gadgets} from section \ref{gadgets} we have the following result given in  section \ref{sec4}:

\textsc{Result 4.} \textit{If the network bargaining game has no gadgets and the maximum $c$-matching $M$ is acyclic, the set of balanced solutions corresponds to the intersection of the core and prekernel.}

\textbf{Related work.} Kleinberg and Tardos \cite{kleinberg2008balanced} studied network bargaining games with unit capacities and developed a polynomial time algorithm for computing the entire set of balanced solutions. They also show that such games have a balanced solution whenever they have a stable one and that a stable solution exists if and only if the linear program for the maximum weight matching of the underlying graph has an integral optimal solution. 

Bateni et al. \cite{bateni2010cooperative} consider network bargaining games with unit capacities, as well as the special case of network bargaining games on bipartite graphs where one side of the partition has all unit capacities. They approach the problem of computing balanced solutions from the perspective of cooperative games. In particular they use the matching game of Shapley and Shubik \cite{shapley1971assignment} and show that the set of stable solutions corresponds to the core, and the set of balanced solutions corresponds to the  intersection of the core and prekernel.  

Like we do here, Kanoria et al. \cite{kanoria2009natural} also study network bargaining games with general capacities. They show that a stable solution exists for these games if and only the linear program for the maximum weight $c$-matching of the underlying graph has an integral optimal solution. They are also able to obtain a partial characterization of the existence of balanced solutions by proving  that if this integral optimum is unique, then a balanced solution is guaranteed to exist. They provide an algorithm for computing balanced solutions in this case which uses local dynamics but whose running time is not polynomial.

\section{Preliminaries and definitions}

An instance of  the \emph{network bargaining game}  is a triple $\lb G, w,c \rb$ where $G $ is a undirected graph,  $w \in \RR_{+}^{|E(G)|}$ is a vector of edge weights, and  $c  \in \ZZ_{+}^{|V(G)|}$ is a vector of vertex capacities. A set of edges $M \subseteq E(G)$ is a $c$-matching of $G$ if  $| \lc v: uv \in M \rc |\leq c_u$ for all $u \in V(G)$. Given a $c$-matching $M$, we let $d_u$ denote the degree of vertex $u$ in $M$. We say that vertex $u$ is \emph{saturated} in $M$ if $d_u = c_u$. 

A \emph{solution} to the network bargaining game $(G,w,c)$ is a pair $\lb M,z \rb$ where $M$ is a $c$-matching of $G$ and $z \in \RR_+^{2|E(G)|}$ assigns each edge $uv$ a pair of values $z_{uv},z_{vu}$ such that  $z_{uv}+z_{vu} = w_{uv} $ if $uv \in M$ and $z_{uv} = z_{vu} = 0$ otherwise. 

The \emph{allocation} associated with the solution $(M,z)$ is the vector $x \in \RR^{|V(G)|}$ where $x_u$ represents the total payoff of vertex $u$, that is for all $u \in V(G)$ we have $x_u = \sum_{v: uv \in M} z_{uv}$. The \emph{outside option} of  vertex $u$ with respect to a solution $(M,z)$ is defined as
\begin{align}
\alpha_u(M,z):= \max \lb 0,  \max_{v: uv \in E(G) \backslash M}  \lb w_{uv}-\textbf{1}_{[d_v= c_v]} \min_{vw \in M} z_{vw} \rb \rb, \notag
\end{align}
where $\textbf{1}_{E}$ is the indicator function for the event $E$, which takes value one whenever the event holds, and zero otherwise. If $ \lc v : uv \in  E(G)  \backslash M \rc = \emptyset$ then we set $\alpha_u(M,z) = 0$. We write $\alpha_u$ instead of $\alpha_u(M,z)$ whenever the context is clear.

A solution $(M,z)$ is \emph{stable} if for all $uv \in M$ we have $z_{uv} \geq \alpha_{u}(M,z)$, and for all unsaturated vertices $u$ we have $\alpha_{u}(M,z) =0$.

A solution $(M,z)$ is \emph{balanced} if it is stable and in addition for all $uv \in M$ we have $z_{uv} - \alpha_{u}(M,z) = z_{vu} - \alpha_{v}(M,z)$.

\subsection{Special case: unit capacities}

The definitions from the previous section simplify in the case where all vertices have unit capacities. Specifically a \emph{solution} to the unit capacity game $(G,w)$ is a pair $\lb M,x \rb$ where $M$ is now a matching of $G$ and $x \in \RR_{+}^{|V(G)|}$ assigns a value to each vertex such that for all edges $uv \in M$ we have $x_{u} +x_{v} = w_{uv}$  and for all $u \in V(G)$ not covered by $M$ we have $x_u= 0$. Since each vertex has at most one unique contract, the vector $x$ from the solution $(M,x)$ is also the allocation vector in this case. 

The \emph{outside option} of  vertex $u$ can now be expressed as
\begin{align}
\alpha_u(M,x):=  \max_{v: uv \in E(G) \backslash M} \lb   w_{uv}-x_{v}\rb , \notag
\end{align}
where as before we set $\alpha_u(M,x) = 0$ whenever $ \lc v : uv \in E(G) \backslash M \rc = \emptyset$. 

A solution $(M,x)$ is \emph{stable} if for all $u \in V(G)$ we have $x_{u} \geq \alpha_{u}(M,x)$ and  \emph{balanced} if it is stable and in addition  $x_{u} - \alpha_{u}(M,x) = x_{v} - \alpha_{v}(M,x)$ for all $uv \in M$.

\subsection{Cooperative games}

Given an instance $(G,w,c)$ of the network bargaining game we let $N=V(G)$ and define the \emph{value} $\nu(S)$ of a set of vertices $S \subseteq N$ as
\begin{align}
\nu(S) := \max_{\text{$M$ $c$-matching of $G[S]$}} w(M). \notag
\end{align}
Then the pair $(N,v)$ denotes an instance of the \emph{matching game} of Shapley and Shubik \cite{shapley1971assignment}. We will refer to this as the matching game associated with the instance $(G,w,c)$. 

Given  $x \in \RR_{+}^{|N|}$ and two  vertices $u, v \in V(G)$ we define the \emph{power} of vertex $u$ over vertex $v$ with respect to the vector $x$ as
\begin{align}
s_{uv}(x) &:= \max_{S \subseteq N: u \in S, v \notin S} \nu(S) - x(S), \notag
\end{align}
where $x(S) = \sum_{u \in S}x_u$. We write $s_{uv}$ instead of $s_{uv}(x)$ whenever the context is clear. The \emph{core} of the game is defined as the set
\begin{align}
\mathcal{C} := \lc x \in \RR_{+}^{|N|} : x(S) \geq \nu(S), \, \forall \, S \subset N, x(N) = \nu(N) \rc. \notag
\end{align}
The \emph{prekernel} of the game is the set
\begin{align}
\mathcal{K} := \lc  x \in \RR_{+}^{|N|}: s_{uv}(x) = s_{vu}(x) \quad \forall \, u,v \in N \rc \notag.
\end{align}

\section{Balanced solutions via cooperative games}

The first attempt towards computing balanced solutions for the network bargaining game with general capacities is to use the connection to cooperative games presented in \cite{bateni2010cooperative}. For the special class of unit capacity and constrained bipartite games, Bateni et al. show that the set of stable solutions corresponds to the core, and the set of balanced solutions  corresponds to the intersection of the core and prekernel of the associated matching game. This implies that efficient algorithms, such as the one of \cite{faigle1998efficient}, can be used to compute points in the intersection of the core and prekernel from which a balanced solution can be uniquely obtained.

\subsection{Allocations in $\mathcal{C} \cap \mathcal{K}$ with no corresponding balanced solutions}\label{sec3}

The first question of interest is whether this equivalence between balanced solutions and the intersection of the core and prekernel extends to network bargaining games with arbitrary capacities. The following lemma proves that this is not always the case. 

\begin{lemma}\label{lem: 1} There exists an instance $(G,w,c)$ of the network bargaining game  and a vector $x \in \mathcal{C} \cap \mathcal{K}$ such that there exists no balanced solution $(M,z)$  satisfying $x_u = \sum_{v: uv \in M} z_{uv} \quad \text{ for all $u \in V(G)$}$.
\end{lemma}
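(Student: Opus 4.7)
I would prove Lemma~\ref{lem: 1} by explicit construction: exhibit a small instance $(G,w,c)$ of the network bargaining game with general capacities together with a distinguished allocation $x$, then verify separately that $x \in \mathcal{C}$, that $x \in \mathcal{K}$, and that no balanced $(M,z)$ has payoff vector $x$. The construction must contain at least one vertex of capacity $\geq 2$, since for the unit capacity case Bateni et al.\ have shown the full equivalence between balanced solutions and $\mathcal{C}\cap\mathcal{K}$, so no counterexample lives there. It will also need edges that lie outside every maximum $c$-matching, since otherwise every outside option vanishes and the balance equations on the chosen $M$ reduce to $z_{uv}=z_{vu}=w_{uv}/2$, forcing a unique $z$ consistent with any allocation in $\mathcal{C}\cap\mathcal{K}$.

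The obstruction we will exploit is structural: with general capacities, the balance identities $z_{uv}=(w_{uv}+\alpha_u-\alpha_v)/2$ for $uv\in M$, combined with the allocation-consistency conditions $\sum_{v:\,uv\in M} z_{uv}=x_u$, form an overdetermined system once one substitutes the definition of the outside option. The scalar $x_u$ does not by itself dictate how a high-capacity vertex $u$ splits its payoff among its contracts, and the candidate $x$ will be chosen so that the split prescribed by balance is incompatible with the split prescribed by allocation-consistency, for every $c$-matching $M$.

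Verifying membership in $\mathcal{C}\cap\mathcal{K}$ reduces to finite checks. For the core one computes $\nu(S)$ for every $S\subseteq V(G)$ as the maximum weight of a $c$-matching of $G[S]$ and confirms $x(N)=\nu(N)$ and $x(S)\geq\nu(S)$ for $S\subsetneq N$. For the prekernel one computes all pairwise powers $s_{uv}(x)$ and checks $s_{uv}(x)=s_{vu}(x)$; the chosen graph will be symmetric enough that only a handful of pairs need to be verified up to automorphism.

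The main obstacle is the nonexistence step: enumerating every $c$-matching $M$ of $G$ for which the linear system $z_{uv}+z_{vu}=w_{uv}$ on $M$ together with $\sum_{v:\,uv\in M} z_{uv}=x_u$ admits a nonnegative solution, and for each such $M$ deriving a contradiction from the joint balance and outside-option constraints. The delicate point is that the outside option $\alpha_u$ depends on $z$ non-linearly through the inner $\min_{vw\in M} z_{vw}$ and the saturation indicator $\mathbf{1}_{[d_v=c_v]}$, so the resulting system is not linear; the contradiction must be extracted through a case analysis tracking, at each saturated vertex $v$ with an outside edge, which incident contract in $M$ attains the minimum. A successful counterexample likely requires only four or five vertices and can be checked by hand.
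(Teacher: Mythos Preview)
Your plan matches the paper's approach: an explicit instance plus direct verification of core, prekernel, and nonexistence of a balanced $(M,z)$ with payoff vector $x$. The paper's instance has six vertices (the cycle $ABCDEF$ together with the chord $BE$), all capacities equal to $2$, and weights chosen so that the outer cycle is the \emph{unique} maximum $2$-matching; the allocation is the constant vector $x_u=20$. Two design choices make the nonexistence step far shorter than the case analysis you anticipate. First, uniqueness of the optimal $M$ removes the enumeration over $c$-matchings entirely, since any stable (hence balanced) solution must sit on a maximum-weight $c$-matching. Second, four of the six vertices ($A,C,D,F$) have no incident edge outside $M$, so their outside options vanish for \emph{every} $z$; balance at the edges $AF$ and $CD$ then forces even splits there, and the allocation constraints $\sum_{v:uv\in M} z_{uv}=x_u$ propagate around the cycle to determine $z$ uniquely. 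Only after $z$ is pinned down does one compute $\alpha_B$ and $\alpha_E$ and observe that a single edge (e.g.\ $BC$) violates balance. Thus the nonlinear $\min$-tracking you flag as the main obstacle is designed away rather than confronted. Your estimate of four or five vertices may be optimistic; in any case the substance of this lemma is the concrete construction, which your proposal outlines but does not yet supply.
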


\begin{proof}Consider the following graph where every vertex has capacity 2 and the edge weights are given above each edge
\begin{center}
\includegraphics[scale=0.4]{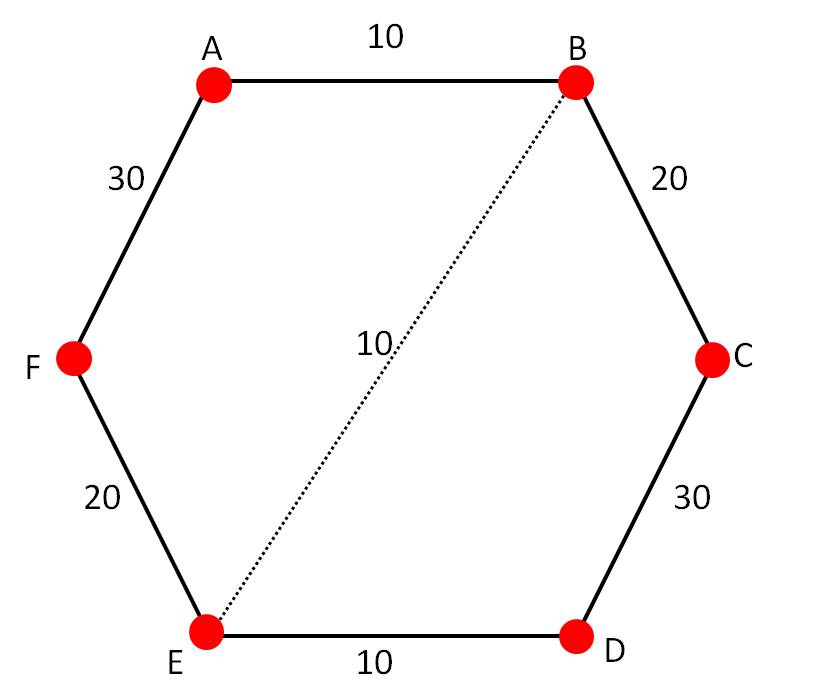}
\end{center}

Consider the vector $x$ defined as  $x_u = 20$ for all $u \in V(G)$.  We now show that the vector $x$ is in the intersection of the core and prekernel and there exists no balanced solution $(M,z)$  corresponding to $x$.

Let $C_1$ denote the outer cycle on vertices $A,B,C,D,E,F$  and let $C_2$ and $C_3$ denote the inner cycles on vertices $B,C,D,E$ and  $E,F,A,B$ respectively. The unique optimal $2$-matching is the cycle $C_1$ with weight $120$. Since any stable, and therefore balanced, solution must occur on a maximum weight $c$-matching \cite{bateni2010cooperative}, any  balanced solution  $(M,z)$  will have $M= E(C_1)$.

It can be easily checked that $x \in \mathcal{C}$. To check that  $x \in  \mathcal{K}$  we  compute the powers $s_{uv}(x) := \max_{T: u \in T, v \notin T} \nu(T) -x(T)$ for all pairs of vertices $u,v \in V(G)$. For the pair $A,B$ we have:
\begin{align}
s_{AB} &= \nu \lb \lc A,F \rc \rb - x \lb \lc A, F \rc \rb = 30 - 40 = - 10 \notag\\
s_{BA} &= \nu \lb \lc C_2 \rc \rb - x \lb \lc C_2 \rc \rb = 70- 80 = - 10 \notag.
\end{align}
Similarly for the pair $B,C$ we have:
\begin{align} 
s_{BC} &= \nu \lb \lc C_3 \rc \rb - x \lb \lc C_3 \rc \rb = 70 - 80 =  - 10 \notag\\
s_{CB} &= \nu \lb \lc C,D \rc \rb  - x \lb \lc C,D \rc \rb = 30-40 = - 10 \notag.
\end{align}
And for the pair $C,D$:
\begin{align}
s_{CD} &= \nu \lb \lc C \rc \rb - x \lb \lc C \rc \rb =  -20 \notag\\
s_{DC} &= \nu \lb \lc D \rc \rb - x \lb \lc D \rc \rb = - 20 \notag.
\end{align}

Hence the pairs $(A,B), (B,C)$ and $(C,D)$ satisfy the prekernel condition. By symmetry so do $(D,E),(E,F)$ and $(F,A)$.  

Note that for any pair $u,v$ of non-adjacent vertices, one of the two cycles $C_2$ or $C_3$ will contain $u$ but not $v$, and viceversa. Therefore  $s_{uv}=s_{vu} = -10$ for all non-adjacent pairs $u,v$.  This proves that $x$ is in the prekernel.

We now show that there is no vector $z$ such that $(M,z)$ is a balanced solution corresponding to the vector $x$. First note that vertices $A,F,C$ and $D$ have an outside option of zero in any solution, since there are no edges in $E \backslash M$ incident with these vertices. Hence the contracts $(C,D)$ and $(A,F)$ have to be split evenly in any balanced solution. Since each vertex must have a total profit of $20$ from its two contracts in $M$, this uniquely determines all values of the vector $z$, which are shown in the figure below
\begin{center}
\includegraphics[scale=0.4]{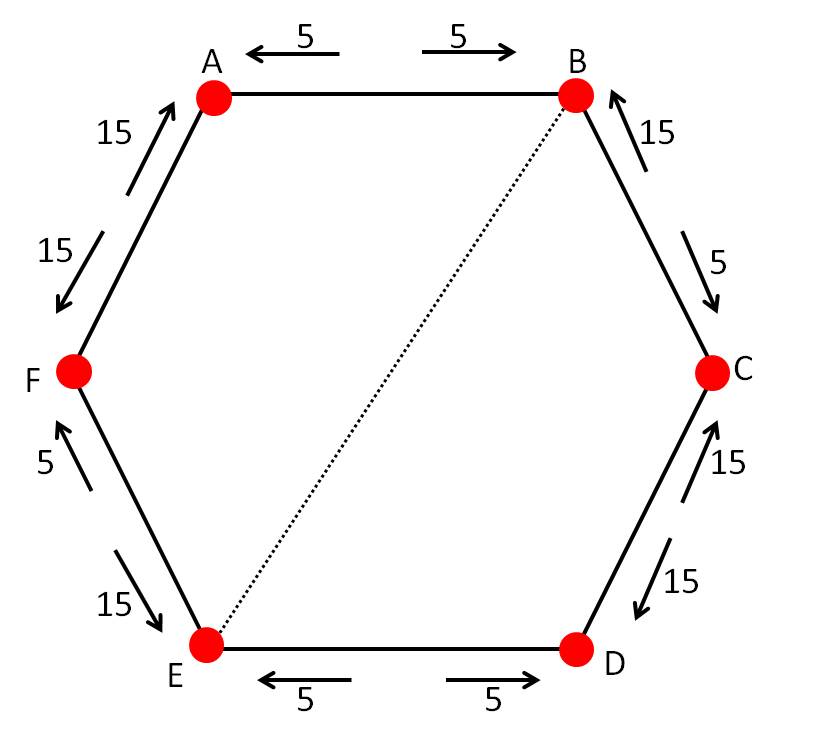}
\end{center}
The minimum contract of both $B$ and $E$ is $5$ and therefore  $\alpha_B = \alpha_E = 10 -5 = 5$. However, the edge $(B,C)$ (and also the edge $(E,F)$ by symmetry) violates the balance condition since $z_{BC} - \alpha_B = 15-5 = 10$ while $z_{CB} - \alpha_C = 5-0 = 5$.

Note that this instance does possess a balanced solution as shown in the figure below
\begin{center}
\includegraphics[scale=0.4]{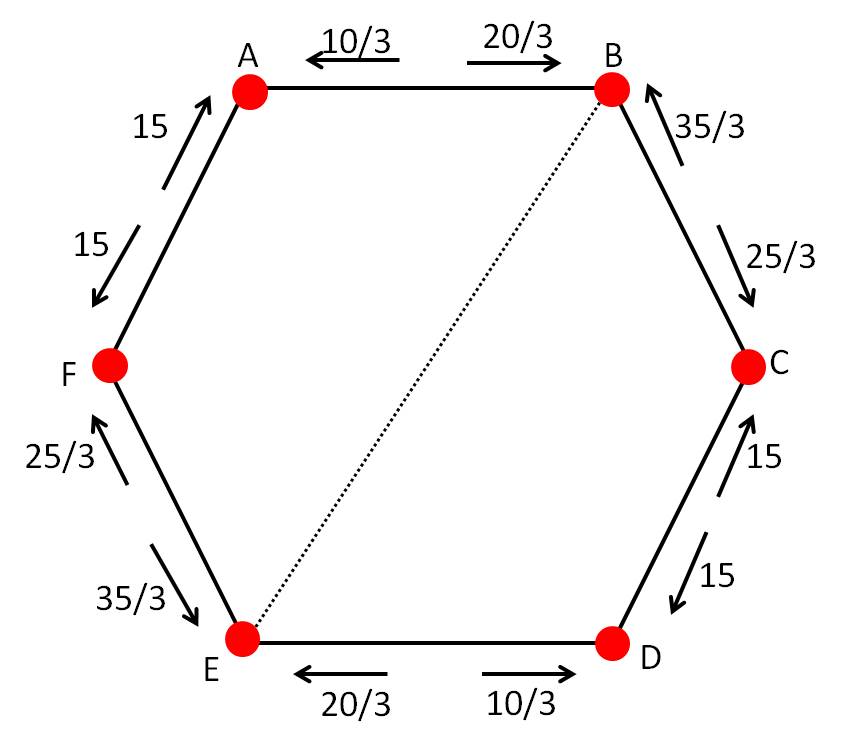}
\end{center}
Here the outside option of both $B$ and $E$ is $10/3$ and all edges in the matching satisfy the balance condition. We also remark that the allocation  associated with this balanced solution is also in the intersection of the core and prekernel. \qed
\end{proof}

In view of Lemma \ref{lem: 1}, we cannot hope to extend the correspondence between  balanced solutions and allocations in the intersection of the core and prekernel to all network bargaining games. However  we  can generalize the results of \cite{bateni2010cooperative} by characterizing a larger class of network bargaining games, including unit capacity and constrained bipartite games, for which this correspondence holds. We  achieve this by defining a certain gadget whose absence, together with the fact that the $c$-matching $M$ is acyclic, will be sufficient for the correspondence to hold.

\subsection{Gadgets}\label{gadgets}

Let $(G,w,c)$ be an instance of the network bargaining game and $(M,z)$ a solution. Consider a vertex $u \in V(G)$ with $\alpha_u(M,z) > 0$ and let $v$ be a neighbour of $u$ in $M$. Let $v'$ be vertex $u$'s best outside option and if $v'$ is saturated in $M$, let $u'$ be its weakest contract. Using these definitions we have
\begin{align}
\alpha_u(M,z) = w_{uv'}-\textbf{1}_{[d_{v'}= c_{v'}]}z_{v'u'}. \notag
\end{align}

We say that $u$ is a $\emph{bad}$ vertex in the solution $(M,z)$ if at least one of the following two conditions holds:
\begin{enumerate}
\item There is a $v-v'$ path in $M$,
\item There is a $u-u'$ path in $M$, that does not pass through vertex $v'$.
\end{enumerate}
We refer to such $v-v'$ or $u-u'$ paths as $\emph{gadgets}$ of the solution $(M,z)$. The following figure depicts these two types of gadgets, solid lines denote edges in $M$ and dashed lines denote edges in $E \backslash M$.
\begin{center}
\includegraphics[scale=0.4]{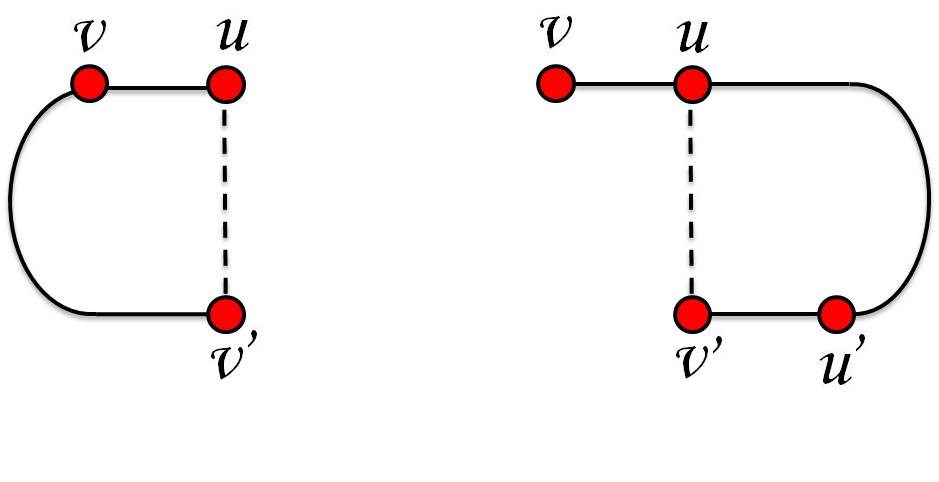}
\end{center}

\subsection{Sufficient conditions for correspondence between set of balanced solutions and $\mathcal{C} \cap \mathcal{K}$}\label{sec4}

We can now state our main theorem of this section.

\begin{theorem}\label{thm: 1} Let $(G,w,c)$ be an instance of the network bargaining game. Let $ x \in \mathcal{C}$ and $(M,z)$ be a corresponding stable solution so that $x_u = \sum_{v: uv \in M} z_{uv}$ for all $u \in V(G)$. If the following two conditions are satisfied
\begin{enumerate}
\item  $M$ is acyclic, 
\item there are no bad vertices in the solution $(M,z)$,
\end{enumerate}
then, the following statement holds
\begin{align}
\text{$(M,z)$ is a balanced solution if and only if $x \in \mathcal{K}$}. \notag
\end{align}
\end{theorem}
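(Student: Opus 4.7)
I plan to establish both directions of the equivalence by first proving the identity
\begin{equation*}
s_{uv}(x) \;=\; \alpha_u(M,z) - z_{uv} \qquad \text{for every } uv \in M.
\end{equation*}
Given this identity, the balance condition $z_{uv}-\alpha_u = z_{vu}-\alpha_v$ on a matched edge rewrites literally as $s_{uv}=s_{vu}$, giving the equivalence on matched pairs at once. To extend the prekernel symmetry to unmatched pairs, I would use the acyclicity of $M$: for $u,v$ sharing an $M$-component, telescope the identity along the unique $u$-$v$ path in $M$, combining with balance at each interior edge; for $u,v$ in different components, show that the optimal coalitions for $s_{uv}$ and $s_{vu}$ split naturally across components and both reduce to quantities symmetric in $u$ and $v$.

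For the lower bound $s_{uv}\geq \alpha_u-z_{uv}$, I would use hypothesis~(1) to define $T_u$, the component of $u$ in $M\setminus\{uv\}$, and compute $x(T_u)=w(M\cap G[T_u])+z_{uv}$. When $\alpha_u=0$, the coalition $S=T_u$ already achieves $\nu(S)-x(S)\geq -z_{uv}$. When $\alpha_u>0$, let $v'$ be $u$'s best outside option and (if $v'$ is saturated) $u'$ its weakest partner. Hypothesis~(2) places $v'$ in a different $M$-component from $v$, and in the saturated case places $u'$ on the opposite side of the edge $u'v'$ from~$u$. I would then take $S$ to be $T_u$ together with either the entire $M$-component of $v'$ (if unsaturated) or the component of $v'$ in $M\setminus\{u'v'\}$ (if saturated). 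Extending $M\cap G[S]$ by the edge $uv'$ (and removing $u'v'$ in the saturated case) produces a valid $c$-matching of $G[S]$ whose weight exceeds $x(S)$ by exactly $\alpha_u-z_{uv}$.

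The upper bound $s_{uv}\leq \alpha_u-z_{uv}$ is the main obstacle. For an arbitrary coalition $S$ with $u\in S$ and $v\notin S$, I would compare a max-weight $c$-matching $M^*$ of $G[S]$ with $M$: since $M^*\cup(M\cap G[N\setminus S])$ is a valid $c$-matching of $G$ and $M$ is globally optimal, $w(M^*)\leq w(M)-w(M\cap G[N\setminus S])$; expanding $x(S)$ then yields the crude estimate
\begin{equation*}
\nu(S)-x(S)\;\leq\;\sum_{\substack{ab\in M\cap\delta(S)\\ a\in S,\,b\notin S}}z_{ba}.
\end{equation*}
To collapse this sum into the single term $\alpha_u-z_{uv}$, I would decompose $M^*\triangle(M\cap G[S])$ into alternating paths and cycles: hypothesis~(1) eliminates alternating cycles in this symmetric difference, while hypothesis~(2) rules out the alternating paths that would let two boundary $M$-edges be simultaneously exploited as outside options. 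Stability of $(M,z)$ then caps each non-$uv$ boundary contribution by the corresponding $\alpha$-value, compressing the sum down to exactly $\alpha_u-z_{uv}$.

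The hardest step by a comfortable margin is this upper bound: turning the crude boundary estimate into the tight expression $\alpha_u - z_{uv}$ is precisely where both structural hypotheses are indispensable, and it is what makes the general-capacity statement genuinely subtler than the unit-capacity case handled in \cite{bateni2010cooperative}.
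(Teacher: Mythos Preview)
Your overall plan---reduce both directions to the identity $s_{uv}(x)=\alpha_u(M,z)-z_{uv}$ for $uv\in M$---is exactly the paper's strategy, and your lower-bound construction (take the $u$-side of $M\setminus\{uv\}$, adjoin the appropriate piece of $v'$'s component, add the edge $uv'$) coincides with the paper's four-case analysis. Your treatment of non-matched pairs is a genuine addition, since the paper is silent there.

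The gap is in the upper bound. Your crude estimate
\[
\nu(S)-x(S)\;\le\;\sum_{\substack{ab\in M\cap\delta(S)\\a\in S,\,b\notin S}} z_{ba}
\]
is correct but hopelessly weak: the right-hand side is a sum of nonnegative terms, one of which is $z_{vu}$, whereas the target $\alpha_u-z_{uv}=\alpha_u-w_{uv}+z_{vu}$ is at most $z_{vu}$ by stability. So you need the \emph{other} boundary terms to contribute something \emph{negative}, which they cannot in your formulation. The proposed fix---decompose $M^*\triangle(M\cap G[S])$ and invoke hypotheses (1) and (2)---does not rescue this: acyclicity of $M$ does not forbid alternating cycles in $M^*\triangle(M\cap G[S])$ (the $M$-edges on such a cycle form a matching, not a cycle), and nothing in the gadget hypothesis bounds the number of boundary edges of an arbitrary coalition $S$.

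The paper's Lemma~2 takes a different and sharper route that you are missing: it does \emph{not} compare $M^*$ with $M$ globally, but instead writes $\nu(T)-x(T)=w(M^*\setminus M)-\sum_{(a,b)\in\mathcal S}z_{ab}$ where $\mathcal S$ collects the half-contracts $z_{ab}$ with $a\in T$ and $ab\in M\setminus M^*$. It then injects each edge $e\in M^*\setminus M$ into a one- or two-element subset $\mathcal S_e\subseteq\mathcal S$ using saturation at the endpoints, and stability gives $w_e\le\sum_{(a,b)\in\mathcal S_e}z_{ab}$ for every such $e$. The key point is that this argument uses \emph{only} stability---hypotheses (1) and (2) play no role in the upper bound. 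They are needed solely for the lower bound, precisely to build the coalition you already described. You have the difficulty located in the wrong half of the proof.
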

\begin{proof} Fix $uv \in M$. Note that it suffices to show $s_{uv} = -z_{uv} + \alpha_u$, since this would imply that $s_{uv} = s_{vu}$ if and only if $ z_{uv} - \alpha_u = z_{vu}- \alpha_v$. Our strategy is to first show that $s_{uv}$ is upper bounded $-z_{uv} + \alpha_u$, after which it will be sufficient to find a set $T$ for which $\nu(T) - x(T)$ achieves this upper bound. We start with the following lemma.

\begin{lemma}\label{lem: 2} $s_{uv} \leq -z_{uv} + \alpha_u$.
\end{lemma}

\begin{proof}
Let $T \subset N$ such that $u \in T$ and $v \notin T$. Let $M'$ be a maximum weight $c$-matching in $G[T]$. Then
\begin{align}
\nu(T) - x(T) &= w(M') - \sum_{a \in T} x_a \notag \\
&= w(M') - \sum_{a \in T, ab \in M} z_{ab} \notag \\
&= \lb w (M' \cap M ) -  \sum_{a \in T, \, ab \in M \cap M'} z_{ab} \rb  + \lb w( M' \backslash M)  - \sum_{a \in T, \, ab \in M \backslash M'} z_{ab}\rb \notag \\
&= w( M' \backslash M) - \sum_{a \in T, \, ab \in M \backslash M'} z_{ab}. \notag
\end{align}
Define the set of ordered pairs
\begin{align}
\mathcal{S} := \lc (a,b) : a \in T \cap V(G), \, b \in V(G), \, ab \in  M \backslash M' \rc, \notag 
\end{align}
so that
\begin{align}\label{key}
\nu(T) - x(T) &= w( M' \backslash M) - \sum_{(a,b) \in \mathcal{S}} z_{ab}.  
\end{align}
Since $(M,z)$ is a stable solution it follows that $M$ is a $c$-matching of maximum weight. Hence any edge in $M' \backslash M$ must have at least one saturated endpoint. Let $\lc a_1, \cdots, a_{\ell} \rc$ be the set of vertices in $T$ which are saturated in $M$. For each $i \in [\ell]$ define the sets of ordered pairs
\begin{align}
\mathcal{E}_i &:= \lc (a_i,b) : a_ib \in M' \backslash M \rc \notag\\
\mathcal{F}_i &:= \lc  (a_i,c): a_ic \in M \backslash M' \rc \notag.
\end{align}
Note that all these sets are pairwise disjoint, and $ \mathcal{F}_i \subset \mathcal{S}$ for all $i \in [\ell]$. Now since each $a_i$ is saturated in $M$  it follows that $\left|\mathcal{E}_i \right| \leq \left|\mathcal{F}_i \right|$. Therefore  we can fix an arbitrary mapping
\begin{align}
\phi_i : \mathcal{E}_i \rightarrow \mathcal{F}_i \notag
\end{align}
so that each element of $\mathcal{E}_i$ is mapped to a distinct element of  $\mathcal{F}_i$. 

We now assign to each edge  $e \in M' \backslash M$ a set of ordered elements $\mathcal{S}_e \subset \mathcal{S}$. As previously observed, each such edge $e$ must have at least one saturated endpoint. Hence let $e=a_iy$ and define:
\begin{align}
\mathcal{S}_e = \begin{cases}  \lc \phi_i  (a_i,a_j), \phi_j \lb a_j,a_i \rb \rc & \text{ if $y= a_j$ for some $j \in [\ell] \backslash \lc i \rc$,} \\
\lc \phi_i \lb a_i,y \rb \rc & \text{otherwise.} \\ \end{cases} \notag
\end{align}
It follows from the definition of the sets $\mathcal{E}_i$ and the choice of the mapping $\phi_i$ that the sets $\mathcal{S}_e$ are well defined, are pairwise disjoint, and are all subsets of $\mathcal{S}$. Let
\begin{align}
\mathcal{S'} := \mathcal{S}\backslash  \lb \cup_{e \in M' \backslash M} \mathcal{S}_e \rb, \notag 
\end{align}
Then from equation \eqref{key} and the fact that each ordered pair of $\mathcal{S}$ belongs to at most one $\mathcal{S}_e$ set we obtain
\begin{align} \label{key3}
\nu(T) - x(T) &\leq \sum_{e \in M' \backslash M} \lb w_e -  \sum_{(a,b) \in \mathcal{S}_e}z_{ab}  \rb - \sum_{(a,b) \in \mathcal{S'}}  z_{ab}. 
\end{align}
Now it follows from stability that for all $e \in M' \backslash M$ we have
\begin{align}\label{key2}
w_{e} \leq \sum_{(a,b) \in \mathcal{S}_e} z_{ab}.
\end{align}
To see this, consider $e=a_iy \in  M' \backslash M$. If $y =a_j$ for some $j \in [\ell] \backslash \lc i \rc$ then $\mathcal{S}_e :=  \lc \phi_i  (a_i,a_j), \phi_j(a_j,a_i) \rc$. Now $\phi_i  (a_i,a_j)$  represents the profit that $a_i$ gets from one of his contracts in $M$, and similarly  $\phi_j \lb a_j,a_i \rb$ represents the profit that $a_j$ earns from one of his contracts in $M$. Since the edge $a_ia_j$ is not in $M$, by stability we must have $\phi_i  (a_i,a_j) + \phi_j(a_j,a_i) \geq w_{a_iy},$. 

In the other case where $y$ is not a saturated vertex in $T$, we have $\mathcal{S}_e = \lc \phi_i \lb (a_i,y) \rb \rc $. Since $y$ is an outside option for $a_i$ and  $\phi_i \lb a_i,y \rb$ represents the profit that $a_i$ gets from one of his contracts in $M$, stability for vertex $a_i$ implies that $\phi_i \lb a_i,y \rb \geq w_{a_iy}$ as required. 

Now suppose that $(u,v) \notin \cup_{e \in M' \backslash M} \mathcal{S}_e$. Then from \eqref{key3} and \eqref{key2} we have
\begin{align}
\nu(T) - x(T) &\leq -z_{uv}.\notag
\end{align}
Since $\alpha_u \geq 0$, this proves the lemma in this case. If on the other hand, there exists a set $S_{e^*}$ such that $(u,v) \in S_{e^*}$. Then using \eqref{key3} and \eqref{key2} again we have
\begin{align}
\nu(T) - x(T) &\leq - z_{uv} + w_{e^*} -  \sum_{(a,b) \in \mathcal{S}_{e^*} \backslash \lc (u,v) \rc} z_{uv}, \notag
\end{align}
Now $(u,v) \in S_{e^*}$ implies that $e^*$ must be an edge in $M' \backslash M$ that is incident to vertex $u$. Hence $e^*=uw$ where $w$ is an outside option for vertex $u$. If $w$ is not saturated then the set  $\mathcal{S}_{e^*} \backslash \lc (uv) \rc$ is  empty. Otherwise if $w$ is saturated,  this set contains a unique ordered pair $(w,k)$ such that $wk \in M \backslash M'$. Therefore it follows from the definition of $\alpha_u$ that 
\begin{align}
\nu(T) - x(T) &\leq - z_{uv} + \alpha_u, \notag
\end{align}
as desired.\qed
\end{proof}

Hence it suffices to find a set $T \subseteq V(G)$ such that $u \in T$, $v \notin T$ and show that $\nu(T) - x(T) \geq -z_{uv} + \alpha_u$. Given a set of vertices $S$ we let $M_S$ denote the edges of $M$ which have both endpoints in $S$. Note that for any set of vertices $S$ we have
\begin{align}\label{comp}
w(M_S) - x(S) = - \sum_{ab \in M : a \in S, b \notin S} z_{ab}.
\end{align}

We define $\mathcal{C}$ to be the set of components of $G$ induced by the edges in $M$. Since $u$ and $v$ are neighbours in $M$ they will be in the same component, call it $C$. Now suppose we remove the edge $uv$ from $C$. Since $M$ is acyclic, this disconnects $C$ into two components $C_u$ and $C_v$, containing vertices $u$ and $v$ respectively. Now $M_{C_u}$ is a valid $c$-matching of $C_u$ hence applying equation \eqref{comp} to the vertex set the component $C_u$ we obtain
\begin{align}
\nu(C_u) - x(C_u) &\geq w(M_{C_u}) - x(C_u) =  - z_{uv}. \notag
\end{align}

If $\alpha_u = 0$ then setting $T$ to be the vertex set of component $C_u$ completes the proof for this case. Hence it remains to consider the case where $\alpha_u > 0$. Then by stability of the solution $(M,z)$  vertex $u$ must be saturated in $M$. Let $v'$ be vertex $u$'s best outside option.

\textbf{Case 1}: $v' \in C_u$ and $v'$ is  not saturated in $M$. Since $uv \notin C_u$ and $v'$ is not saturated in $M$ the set of edges  $M_{C_u} \cup \lc uv' \rc$ is a valid $c$-matching of $C_u$ and therefore
\begin{align}
\nu(C_u) - x(C_u) &\geq w(M_{C_u} \cup \lc uv' \rc) - x(C_u) \notag\\
&= w(M_{C_u}) - x(C_u)+ w_{uv'}  \notag \\
&= - z_{uv} + w_{uv'} &&  \text{applying \eqref{comp} to $C_u$}. \notag
\end{align}

\textbf{Case 2}:  $v' \in C_u$ and $v'$ is saturated in $M$. Let $u'$ be the weakest contract of $v'$ and suppose we remove the edge $v'u'$ from $C_u$. Since $M$ is acyclic, this disconnects $C_u$ into two components. From condition $(2)$ we know that $u'$ is not on the $u-v'$ path in $M$. Hence $u$ and $v'$ are in the same component of $C_u \backslash \lc v'u' \rc$. Denote this component by $D_u$. Now  $M_{D_{u}} \cup \lc uv' \rc$ is a $c$-matching of $D_u$ and thus
\begin{align}
\nu(D_u) - x(D_u) &\geq w \lb M_{D_u} \cup \lc uv' \rc \rb - x(D_u) \notag \\
&= w(M_{D_u}) - x(D_u) + +w_{uv'} \notag \\
&= - z_{uv} - z_{v'u''} + + w_{uv'} &&  \text{applying  \eqref{comp} to $D_u$}\notag \\ 
&= -z_{uv} + \alpha_u &&   \text{ by choice of $v'$ and $u'$.} \notag 
\end{align}

\begin{figure}\label{fig}
\hfill
\subfigure[Case 1]{\includegraphics[width=4cm]{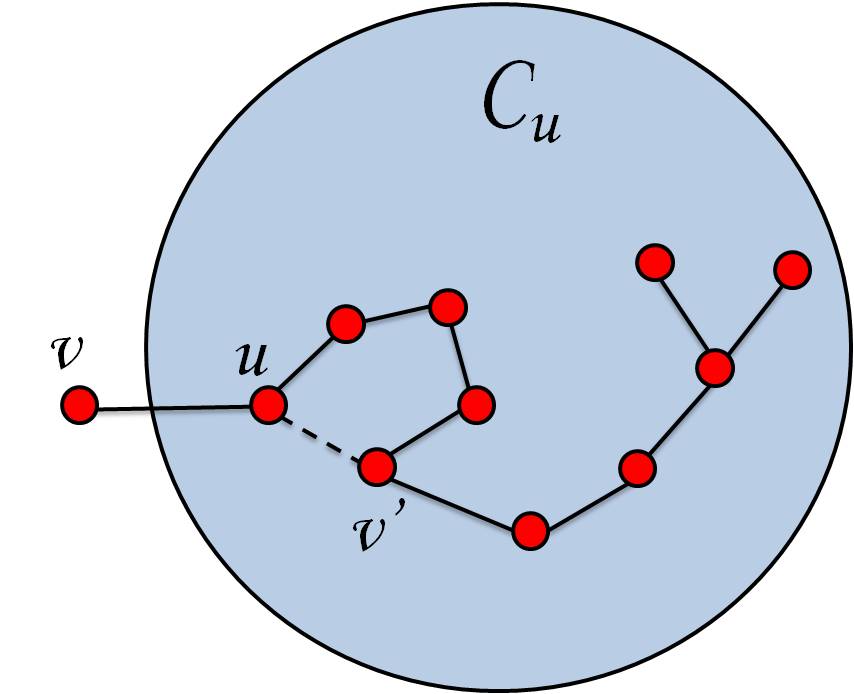}}
\hfill
\subfigure[Case 2]{\includegraphics[width=4cm]{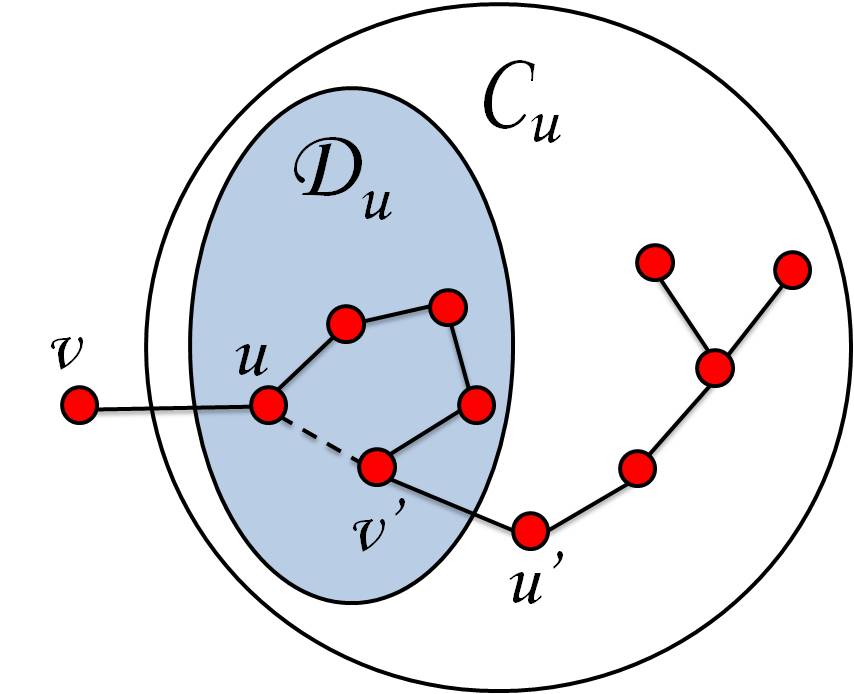}}
\hfill
\caption{Cases 1 and 2 in the proof Theorem 1}
\end{figure}

\textbf{Case 3}: $v' \notin C_u$ and $v'$ is not saturated in $M$. Condition $(2)$ implies that there is no  $v-v'$ path in $M$, hence the fact that $v' \notin C_u$  implies that $v' \notin C$. Let $C_{v'}$ be the component in $\mathcal{C}$ that contains vertex $v'$. Now $M_{C_u} \cup M_{C_{v'}} \cup \lc uv' \rc $ is a $c$-matching of $C_u \cup C_{v'}$ and therefore
\begin{align}
\nu  (C_u \cup C_{j'})  - x(C_u \cup C_{v'}) &\geq w(M_{C_u} \cup M_{C_{v'}} \cup \lc uv' \rc) - x(C_u \cup C_{v'}) \notag \\
&= \lb w(M_{C_u})- x(C_u) \rb + \lb w(M_{C_{v'}})- x(C_{v'}) \rb + w_{uv'}  \notag \\ 
&= - z_{uv} + w_{uv'} \quad \quad \quad \quad \quad  \, \text{applying  \eqref{comp} to $C_u$, $C_{v'}$} \notag \\
&= -z_{uv} + \alpha_u \quad \quad \quad \quad \quad \quad   \text{ by choice of $v'$.} \notag 
\end{align}

\begin{figure}
\begin{center}
\includegraphics[width=6.6cm]{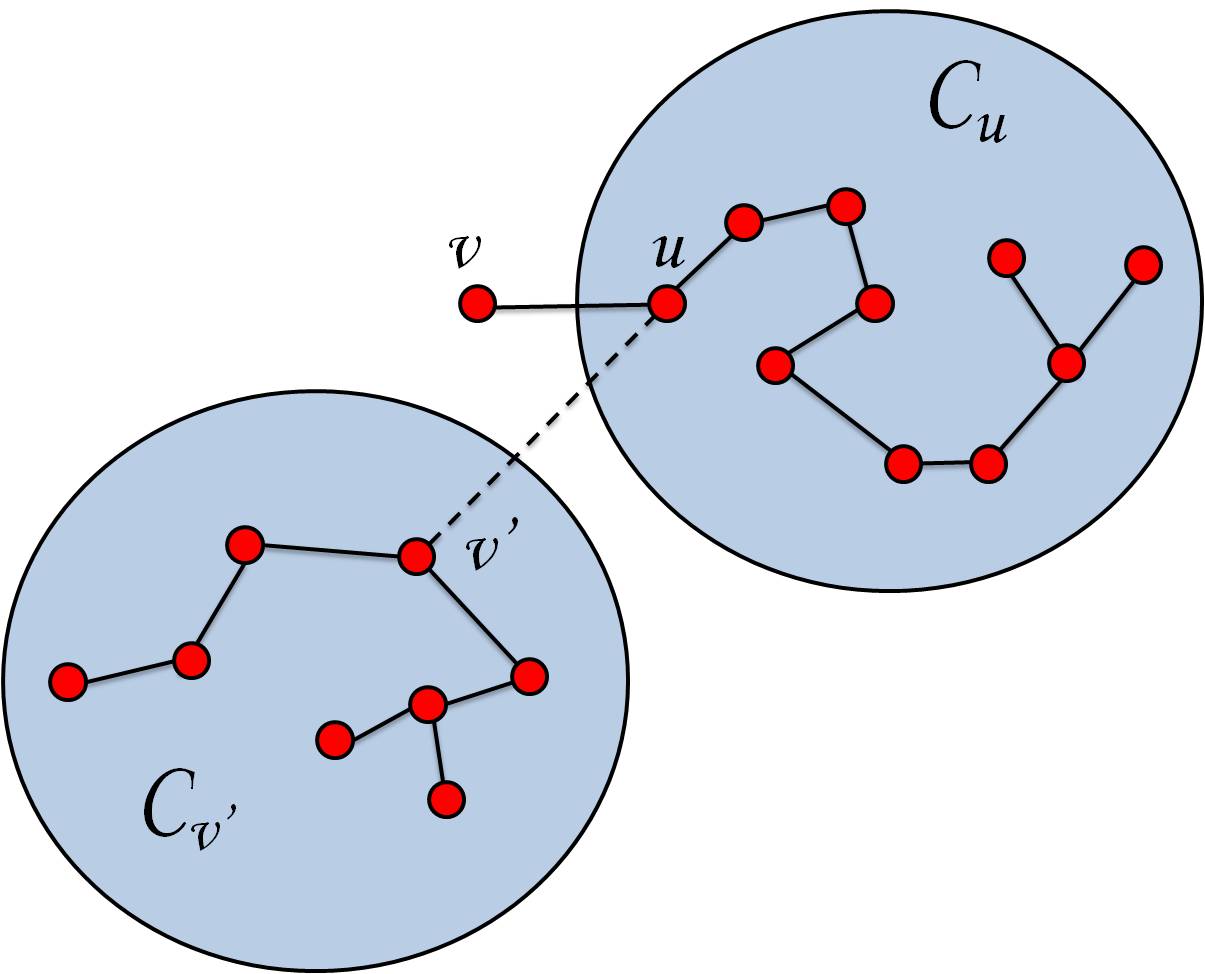}
\caption{Case 3 in the proof Theorem 1}
\end{center}
\end{figure}

\textbf{Case 4}: $v' \notin C_u$ and $v'$ is saturated in $M$. As in Case 3 we let $C_{v'}$ be the component in $\mathcal{C}$ that contains vertex $v'$. Let $u'$ be the weakest contract of $v'$ and suppose we remove the edge $v'u'$ from $C_{v'}$. Since $M$ is acyclic, this disconnects $C_{v'}$ into two components and we let $D_{v'}$ be the one that contains vertex $v'$.  Now $M_{C_u} \cup M_{D_{v'}} \cup \lc uv' \rc $ is a $c$-matching of $C_u \cup C_{v'}$ thus
\begin{align}
\nu(C_u \cup D_{v'}) - x(C_u \cup D_{v'}) &\geq  w \lb M_{C_u} \cup M_{D_{v'}} \cup \lc uv' \rc  \rb - x \lb C_u \cup C_{v'} \rb \notag \\
&= \lb w(M_{C_u})- x(C_u) \rb + \lb w(M_{ D_{v'}})- x( D_{v'})\rb + w_{uv'}  \notag \\ 
&= - z_{uv} - z_{v'u'}+w_{uv'} \quad \quad  \text{ applying  \eqref{comp} to $C_u$ , $D_{v'}$} \notag \\
&= -z_{uv} + \alpha_u \quad \quad \quad \quad  \quad \quad \text{ by choice of $v'$ and $u'$.} \notag 
\end{align}

\begin{figure}
\begin{center}
\includegraphics[width=6.6cm]{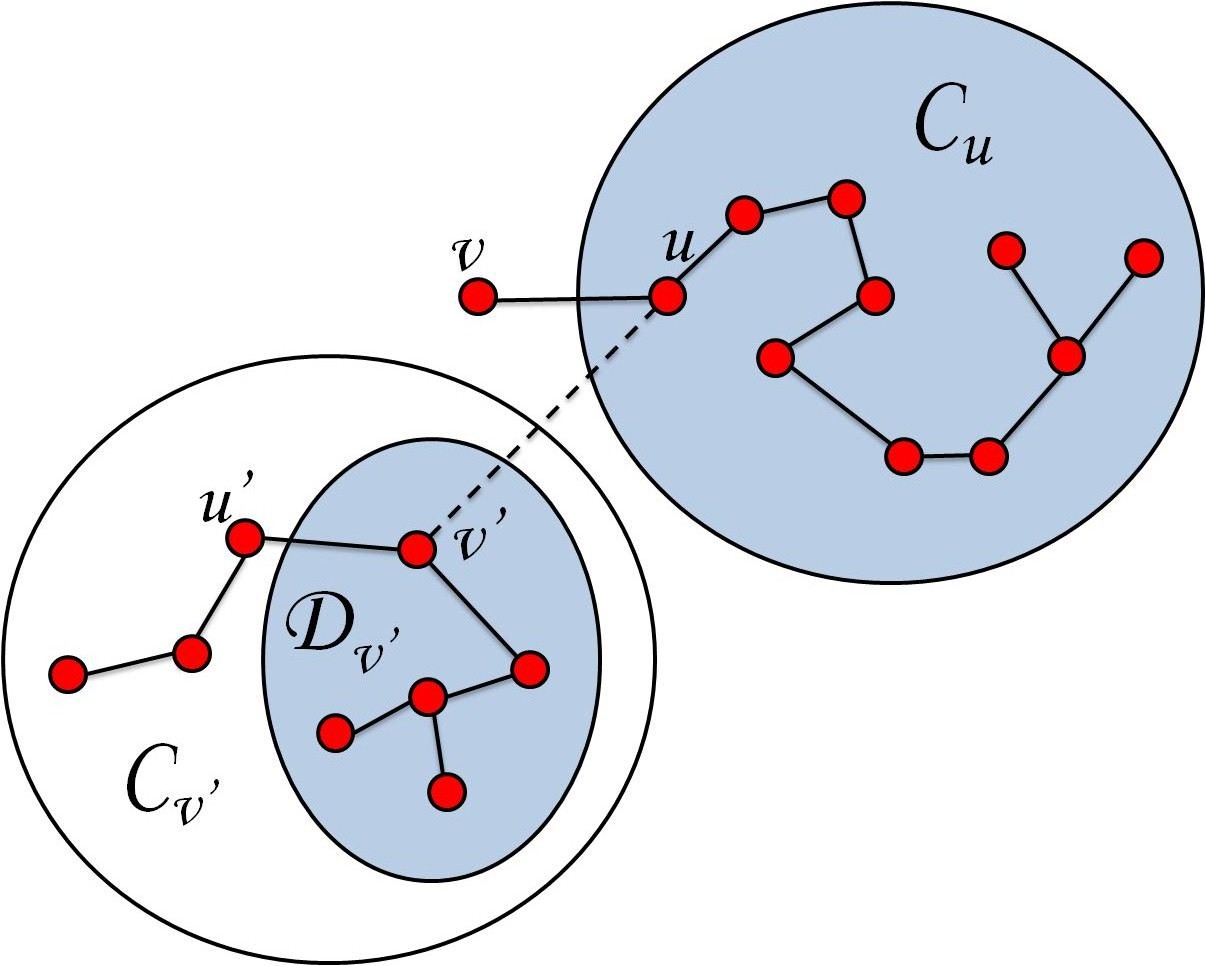}
\caption{Case 4 in the proof Theorem 1}
\end{center}
\end{figure}

Hence in all cases we have $\nu(T) -x(T) \geq - z_{uv} + \alpha_u$ as required. This completes the proof of the theorem. \qed
\end{proof}

We note that all network bargaining games studied in \cite{bateni2010cooperative} satisfy conditions $(1)$ and $(2)$ of Theorem \ref{thm: 1}. In addition to these, Theorem \ref{thm: 1} also covers the case of network bargaining games where the underlying graph is a tree, but the vertices are allowed to have arbitrary capacities.  Hence starting with a maximum weight $c$-matching $M$ we can use the polynomial time algorithm of \cite{faigle1998efficient} to compute a point in the intersection of the core and prekernel for these games, from which  we can obtain a corresponding solution $(M,z)$. Then using Theorem \ref{thm: 1} we know that $(M,z)$ will be balanced.

\section{Balanced solutions via reduction to the unit capacity games}

While we were able to generalize the class of network bargaining games for which balanced solutions can be obtained by computing a point in the intersection of the core and prekernel, we were not able to apply this technique to all network bargaining games. In this section we show that balanced solutions can be obtained to any network bargaining game $(G,w,c)$ by a reduction to a unit capacity game defined on an auxiliary graph.\\

\subsection{Construction of the instance $(G',w')$ and  matching $M'$}\label{const}

Suppose we are given an instance $\lb G, w,c \rb$  of the network bargaining game
together with a $c$-matching $M$ of $G$. We describe below  how to obtain 
an instance  $(G',w')$  of the unit capacity game together with a matching $M$ of $G$.

Construction: $[ \lb G, w,c \rb, M ] \rightarrow [(G',w'),M']$
\begin{enumerate}
\item for each $u \in V(G)$: fix an arbitrary labelling $\sigma_u : \lc v: uv \in M \rc \rightarrow \lc 1, \cdots, 
d(u) \rc$ and create $c_u$ copies $u_1, \cdots, u_{c_u}$ in $V(G')$.
\item for each $uv$ in $E(G)\cap M$: add the edge  $u_{\sigma_v(u)}v_{\sigma_u(v)}$ to $E(G') \cap M'$ and set its weight 
to $w_{uv}$.
\item for each edge $uv \in E(G) \backslash M$:   add all edges  $u_{i}v_{j}$ to $E(G')$   for all $i \in [c_u]$ and $j \in [c_v]$, and set all 
their weights to $w_{uv}$. 
\end{enumerate}

\text{}\\
\textbf{Example 1:} Consider the instance depicted on the left hand side of the figure below. The solid edges are in the $c$-matching $M$ and the dotted edges are in $E \backslash M$. Node $u$ has capacity four, nodes $x$ and $y$ have capacity two and all other nodes have capacity one. All edges have unit weight.

\begin{center}
\includegraphics[width=0.8 \textwidth]{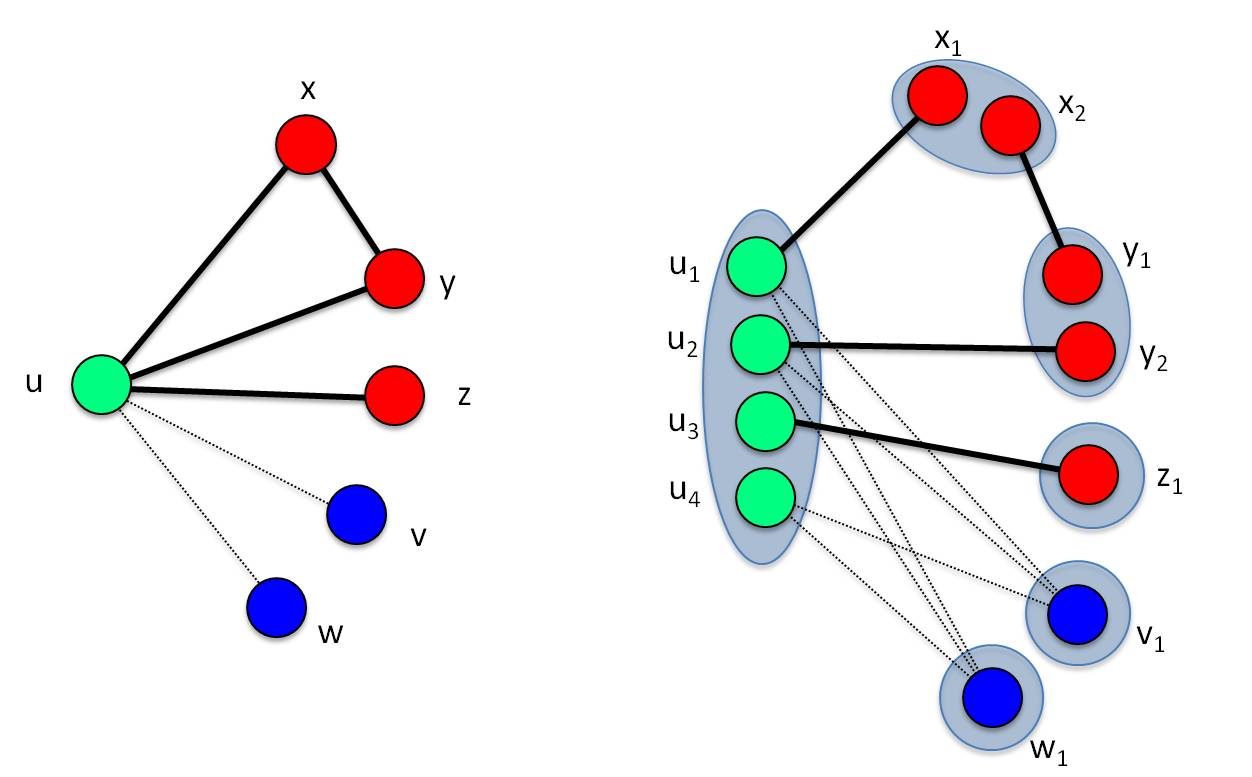}
\end{center}

We make four copies of $u$ in $G'$, two copies of $x$ and $y$, and one copy of every other node.  Each edge in $M$ corresponds to a unique edge in $M'$. For the edges $uv$ and $uw$ which are not in $M$, we connect every copy of $u$ to every copy of $v$ and $w$ with edges in $E(G') \backslash M'$. The resulting graph is on the right.

\subsection{Mapping between the two solution sets}

Suppose we are given an instance of the network bargaining game $(G,w,c)$ with a $c$-matching $M$. Let $[(G',w'),M']$ be obtained using the construction given in  section \ref{const}. Note that $M$ and $M'$ have the same number of edges and each edge  $uv \in M$ is mapped to the unique edge $u_iv_j \in M'$ where $i = \sigma_v(u)$ and $j = \sigma_u(v)$. This allows us to go back and forth  between solutions on $M$ and $M'$ by dividing the weight of each edge in the same way as its corresponding pair.

We define the two solution sets:
\begin{align}
\mathcal{X} &:= \lc x \in \RR^{|V(G')|}: \text{ $(M',x)$ is a solution to  $(G',w')$} \rc \notag \\
\mathcal{Z} &:= \lc z \in \RR^{2|E(G)|} : \text{$(M,z)$ is a solution to  $(G,w,c)$} \rc. \notag
\end{align}

And the two mappings:
\begin{enumerate}
\item $\phi: \mathcal{X} \rightarrow \mathcal{Z}$

For all $uv \in E$ define
\begin{align}
\lb \phi(x)_{uv}, \phi(x)_{vu} \rb := \begin{cases} \lb x_{u_{\sigma_v(u)}}, x_{v_{\sigma_u(v)}} \rb & \text{ if $uv \in M$,} \\
\lb 0, 0 \rb &\ \text{otherwise.} \end{cases} \notag
\end{align}

\item $\phi^{-1}: \mathcal{Z} \rightarrow \mathcal{X}$. 

For all $u_i \in V(G')$ define
\begin{align}
\phi^{-1}(z)_{u_i} := \begin{cases} z_{uv} & \text{ if $i = \sigma_u(v)$,} \\
0 &\ \text{otherwise.} \end{cases} \notag
\end{align}
\end{enumerate}

Note that $z=\phi(x)$ if and only if $x=\phi^{-1}(z)$. The following lemma shows that the mapping given by the function $\phi$ and its inverse $\phi^{-1}$ defines a bijection between the $\mathcal{X}$ and $\mathcal{Z}$

\begin{lemma}\label{lem: 3} 
\begin{enumerate}
\item If $x \in \mathcal{X}$ and $z = \phi(x)$, then $z \in \mathcal{Z}$.
\item  If  $z \in \mathcal{Z}$ and $x = \phi^{-1}(z)$ then  $x \in \mathcal{X}$.
\end{enumerate}
\end{lemma}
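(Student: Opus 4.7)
The plan is to verify the two defining conditions of $\mathcal{X}$ and $\mathcal{Z}$ under each mapping. The backbone is the observation, built into the construction of Section~\ref{const}, that $M$ and $M'$ are in weight-preserving bijection: each $uv \in M$ corresponds to the unique $M'$-edge between the prescribed copies of $u$ and $v$, and this edge has weight $w_{uv}$ in $G'$ by step~2 of the construction. The edges of $E(G)\setminus M$ generate edges in $E(G')\setminus M'$, but these play no role in the verification since both $\phi$ and $\phi^{-1}$ set values only according to the matched edges/vertices.

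For part (1), fix $x \in \mathcal{X}$ and set $z = \phi(x)$. I would check: (a) $z \geq 0$; (b) $z_{uv} + z_{vu} = w_{uv}$ for every $uv \in M$; and (c) $z_{uv} = z_{vu} = 0$ for $uv \notin M$. Parts (a) and (c) are immediate from the case analysis in the definition of $\phi$ together with $x \geq 0$. For (b), the $M'$-edge corresponding to $uv$ has weight $w_{uv}$ by construction, and since $(M',x)$ is a solution to $(G',w')$, the two endpoint values on that edge sum to $w_{uv}$; by the definition of $\phi$ these endpoint values are exactly $z_{uv}$ and $z_{vu}$.

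For part (2), fix $z \in \mathcal{Z}$ and set $x = \phi^{-1}(z)$. I would check: (a) $x \geq 0$; (b) $x_{u_i} + x_{v_j} = w'_{u_iv_j}$ for every $u_iv_j \in M'$; and (c) $x_{u_i} = 0$ for every copy $u_i$ not covered by $M'$. Non-negativity is immediate from $z \geq 0$. For (b), each edge of $M'$ arises uniquely from some $uv \in M$, and the definition of $\phi^{-1}$ assigns values $z_{uv}$ and $z_{vu}$ to its two endpoints; these sum to $w_{uv} = w'_{u_iv_j}$ because $(M,z)$ is a solution.

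The only point requiring slight bookkeeping is condition (c) of part~(2): one must argue that a copy $u_i$ uncovered by $M'$ lands in the ``otherwise'' branch of $\phi^{-1}$ so that $x_{u_i}=0$. This follows from the fact that the labeling $\sigma_u$ puts the $M$-neighbors of $u$ in bijection with the matched copies of $u$ in $M'$: the uncovered copies are precisely those whose index does not correspond to any $M$-neighbor $v$ of $u$, and on such indices $\phi^{-1}$ is defined to output $0$. This is the only place where the consistency between the construction and the definition of $\phi^{-1}$ is invoked, and once it is noted there is nothing deeper to prove; the remainder is a direct substitution exercise.
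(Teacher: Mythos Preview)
Your proposal is correct and follows essentially the same approach as the paper's own proof: a direct verification of the defining conditions of $\mathcal{Z}$ and $\mathcal{X}$ using the weight-preserving bijection between $M$ and $M'$ built into the construction. If anything, you are slightly more explicit than the paper, which dispatches the uncovered-copy case of part~(2) with a bare ``by definition,'' whereas you spell out why such copies fall into the ``otherwise'' branch of $\phi^{-1}$.
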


\begin{proof}
Let  $x \in \mathcal{X}$ and $z = \phi(x)$. We show that $z \in \mathcal{Z}$. Take $uv \in E(G) \cap M$. Suppose $i= \sigma_v(u)$ and $j = \sigma_u(v)$. 
Then from the construction of $G'$ and $M'$ it follows that  $u_iv_j \in M'$. We have
\begin{align}
z_{uv} + z_{vu} &= x_{u_i} +  x_{v_j} && \text{from the definition of $\phi(x)$} \notag \\
&= w_{u_iu_j} && \text{since $(M',x)$ is a solution } \notag \\
&= w_{uv} && \text{from the construction of $(G',w')$}. \notag
\end{align}
Furthermore if $uv \in E(G) \backslash M$ then from the definition of $\phi(x)$ we have $z_{uv} =z_{vu} = 0$. 

Now let  $z \in \mathcal{Z}$ and $x = \phi^{-1}(z)$. We show that  $x \in \mathcal{X}$. Take   $u_iv_j \in E(G') \cap M'$. From the construction of $G'$ and $M'$ there 
must exist an edge $uv \in E(G) \cap M$ such that  $i= \sigma_v(u)$ and $j = \sigma_u(v)$.  We 
have
\begin{align}
x_{u_i} + x_{v_j} &= z_{uv} +  z_{vu} && \text{from the definition of $\phi^{-1}(z)$} \notag \\
&= w_{uv} && \text{since $(M,z)$ is a solution } \notag \\
&= w_{u_iv_j} && \text{from the construction of $(G',w')$}. \notag
\end{align}
Furthermore if $u_i$ is uncovered in $M'$ then $x_{u_i} = 0$ by definition. \qed
\end{proof}

From now on we write $(M,z) \sim (M',x)$ whenever $z = \phi(x)$ or  equivalently $x = \phi^{-1}(z)$.  The next  lemma is the key step in showing that certain properties of a solution are preserved under our mapping.

\begin{lemma}\label{lem: key} Let $(G,w,c)$ be an instance of the network bargaining game and $M$ a  $c$-matching on $G$. Suppose the auxiliary instance $(G',w')$ and the matching $M'$ were obtained using the  construction given in section \ref{const}. Let $(M,z)$ be a solution to $(G,w,c)$ and $(M',x')$  
a solution to $(G',w')$ such that $(M,z) \sim (M',x)$. Then for any $u \in V(G)$ and any $i 
\in [d_u]$ we have 
\begin{align}
\alpha_u(M,z) = \alpha_{u_i}\lb M', x \rb.\notag
\end{align}
\end{lemma}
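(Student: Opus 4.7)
The plan is to directly unwrap the definition of $\alpha_{u_i}(M',x)$ using the construction of $G'$ and match it term-by-term against the general-capacity expression for $\alpha_u(M,z)$. The key structural observation is that, in the auxiliary graph $G'$, the non-matching neighbours of a copy $u_i$ are precisely the copies $\{v_j : uv \in E(G)\setminus M,\ j\in[c_v]\}$, with $w'_{u_iv_j}=w_{uv}$. In particular, the set of non-matching neighbours depends only on $u$, not on the index $i$; this is what will force the conclusion that $\alpha_{u_i}$ is the same for every $i\in[d_u]$. If there are no such neighbours, both sides are $0$ by convention, so I would dispose of that case first.

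Assume therefore that $u$ has at least one non-matching neighbour. I would write
\begin{align}
\alpha_{u_i}(M',x) \;=\; \max_{v:\, uv\in E(G)\setminus M}\ \max_{j\in[c_v]}\bigl(w_{uv}-x_{v_j}\bigr) \;=\; \max_{v:\, uv\in E(G)\setminus M}\Bigl(w_{uv}-\min_{j\in[c_v]} x_{v_j}\Bigr).\notag
\end{align}
The main computation is then to evaluate the inner minimum $\min_{j\in[c_v]} x_{v_j}$. By the definition of $\phi^{-1}$, a copy $v_j$ satisfies $x_{v_j}=z_{vw}$ whenever $j=\sigma_v(w)$ for some neighbour $w$ of $v$ in $M$, and $x_{v_j}=0$ otherwise. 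I would then split into two cases: if $v$ is saturated ($d_v=c_v$), then the map $\sigma_v$ covers every index $j\in[c_v]$, so the collection $\{x_{v_j}\}_{j\in[c_v]}$ equals $\{z_{vw}: vw\in M\}$ and the minimum is $\min_{vw\in M} z_{vw}$. If $v$ is unsaturated, at least one copy $v_j$ is uncovered in $M'$ and contributes the value $0$, while all other entries of $x$ are nonnegative (since $x\in\mathcal X$), so the minimum is $0$. These two cases combine into the single formula $\min_{j\in[c_v]} x_{v_j}=\mathbf 1_{[d_v=c_v]}\min_{vw\in M} z_{vw}$, and substituting this back recovers exactly the inner expression of $\alpha_u(M,z)$.

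The only remaining subtlety is that the general-capacity definition clamps its maximum at $0$ while the unit-capacity definition does not, so I would handle this by noting that the $\max(0,\cdot)$ in $\alpha_u(M,z)$ is active only when every non-matching option is unprofitable, and in that case the equality $\alpha_u(M,z)=\alpha_{u_i}(M',x)$ will be used in the paper only in conjunction with stability (which forces the underlying maximum to be nonnegative), so the two conventions agree where the lemma is applied. I would therefore state the lemma as giving equality of the raw maxima and invoke the $\max(0,\cdot)$ clamp only where needed. The main obstacle, and the only genuine bookkeeping hazard, is the saturated/unsaturated split for $v$ together with the fact that $\sigma_v$ is defined only on the matched neighbours of $v$; the bijective correspondence $\{j : v_j \text{ covered in } M'\}\leftrightarrow\{w : vw\in M\}$ built into the construction is precisely what makes the indicator $\mathbf 1_{[d_v=c_v]}$ appear on the nose.
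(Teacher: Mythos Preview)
Your argument is correct and rests on exactly the same structural facts as the paper's proof: that the non-matching neighbours of $u_i$ in $G'$ are precisely the copies $v_j$ with $uv\in E(G)\setminus M$ and $j\in[c_v]$, and that $\min_{j\in[c_v]} x_{v_j}=\mathbf 1_{[d_v=c_v]}\min_{vw\in M} z_{vw}$ via the saturated/unsaturated split on $v$. The paper organises the same content as a pair of inequalities: for $\alpha_u\le\alpha_{u_i}$ it takes $u$'s best outside option $v$ in $G$ and exhibits a particular copy $v_j$ (the uncovered one if $v$ is unsaturated, otherwise the one indexed by $v$'s weakest contract) as a witness in $G'$; for $\alpha_u\ge\alpha_{u_i}$ it runs the witness-passing in reverse. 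Your direct evaluation of the double maximum is more compact and avoids repeating the case analysis twice, while the paper's version makes the bijection between optimal outside options on the two sides more explicit. You have also spotted a genuine wrinkle the paper glosses over: the unit-capacity definition of $\alpha_{u_i}$ in the paper omits the outer $\max(0,\cdot)$, so the literal equality can fail when every non-matching option of $u$ is unprofitable; the paper's own proof (``we may assume $\alpha>0$, since otherwise there is nothing to show'') does not close this case either, and your observation that the discrepancy is immaterial in every downstream application is the correct resolution.
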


\begin{proof}
We first show that  $\alpha_u(M,z) \leq \alpha_{u_i}\lb M', x \rb$. We may 
assume that $ \alpha_u(M,z) > 0$ since otherwise there is nothing to show. Let $v$ be 
vertex $u$'s best outside option in $(M,z)$. That is $uv \in E(G) \backslash M$ and
\begin{align}
\alpha_u(M,z) &=  w_{uv}-\textbf{1}_{[d_v= c_v]} \min_{vw \in M}  z_{vw}  \notag .
\end{align}

Since $uv \in E(G) \backslash M$ we have $u_i v_j \in E(G') \backslash M'$ for all $j \in [c_v]$.  We have two cases:
\begin{enumerate}
\item $v$ is not saturated in $M$. Then the vertex $v_{d_v+1}$ is in $V(G')$ and it is not covered by $M'$.  Since 
$u_iv_{d_v+1} \in E(G') \backslash M'$ we have
\begin{align}
\alpha_{u_i}\lb M', x \rb &\geq w_{u_iv_{d_v+1}} - x_{v_{d_v+1}} \notag \\
& = w_{u_iv_{d_v+1}} && \text{$v_{d_v+1}$ is not covered by $M'$ so $x_{v_{d_v+1}} = 0$} \notag \\
&= w_{uv} && \text{from the definition of $(G',w')$} \notag \\
&= \alpha_u(M,z)  && \text{by choice of $v$.} \notag
\end{align}
\item  $v$ is saturated in $M$. Let $w = \arg \min_{vw \in M} z_{vw}$. Suppose that $j = \sigma_w(v)$. Then 
$v_j$ is covered in $M'$ and $x_{v_j} = z_{vw}$. Since $u_iv_{j} \in E(G') \backslash M'$ we have
\begin{align}
\alpha_{u_i}\lb M', x \rb &\geq w_{u_iv_{j}} - x_{v_j} \notag \\
& = w_{uv} - z_{vw}   && \text{from the definition of $(G',w')$}\notag \\
&= \alpha_u(M,z) && \text{by choice of $v$.} \notag
\end{align}
\end{enumerate}

We now show that  $\alpha_u(M,z) \geq \alpha_{u_i}\lb M', x \rb$. We may 
assume that $ \alpha_{u_i}\lb M', x \rb> 0$. Let $v_j$ be vertex $u_i$'s best outside 
option in $(M',x)$. That is,  $v_j \in V(G')$ such that $u_iv_j \in E(G') \backslash M'$ and
\begin{align}
\alpha_{u_i}(M',x) &=  w_{u_iv_j}-x_{v_j}\notag .
\end{align}
Since $u_iv_j \in E(G') \backslash M'$ we must have $uv \in E(G) \backslash M$. Again, we have two cases:
\begin{enumerate}
\item $v_j$ is not covered in $M'$. Then the vertex $v$ is not saturated in $M$ and 
\begin{align}
\alpha_u(M,z)  &\geq w_{uv} \notag \\
& = w_{u_iv_{j}} && \text{from the definition of $(G',w')$} \notag \\ 
&= w_{u_jv_j} - x_{v_j} && \text{$v_j$ is not covered in $M'$ so $x_{v_{j}} = 0$} \notag \\
&=  \alpha_{u_i}\lb M', x \rb && \text{by choice of $v_j$.} \notag
\end{align}
\item $v_j$ is covered in $M$. Then there exists $w \in V(G)$ such that $vw \in E(G) \cap M$ and $j = \sigma_{w}(v)$ .  We have
\begin{align}
\alpha_u(M,z)  &\geq w_{uv} - z_{vw} \notag \\
&= w_{u_iv_{j}}  - x_{v_j} && \text{from the definition of $(G',w')$} \notag \\ 
&=  \alpha_{u_i}\lb M', x \rb && \text{by choice of $v_j$.} \notag
\end{align}
\end{enumerate}
\qed
\end{proof}

Using Lemma \ref{lem: key} we can now prove that  stability and balance are preserved when mapping between solutions of the network bargaining game and the corresponding unit capacity game of the auxiliary instance.

\begin{theorem}\label{thm: main} Let $(G,w,c)$ be an instance of the network bargaining game and $M$ a  $c$-matching on $G$. Suppose the auxiliary instance $(G',w')$ and the matching $M'$ were obtained using the  construction given in section \ref{const}. Let $(M,z)$ be a solution to $(G,w,c)$ and $(M',x')$  
a solution to $(G',w')$ such that $(M,z) \sim (M',x)$.  Then:
\begin{enumerate}
\item  $(M,z)$ is stable if and only if $(M',x)$ is stable.
\item $(M,z)$ is balanced if and only if $(M',x)$ is balanced.
\end{enumerate}
\end{theorem}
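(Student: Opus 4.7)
The plan is to deduce both parts of Theorem~\ref{thm: main} directly from the correspondence $(M,z) \sim (M',x)$ of Lemma~\ref{lem: 3}, combined with the outside option identification furnished by Lemma~\ref{lem: key}. The bijection $\phi$ gives us a one-to-one correspondence between edges of $M$ and $M'$: each $uv \in M$ maps to a unique $u_iv_j \in M'$ with $x_{u_i} = z_{uv}$ and $x_{v_j} = z_{vu}$. Unmatched copies $u_i \in V(G')$ carry value $0$, and such copies exist if and only if $u$ is unsaturated in $M$.

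For part $(1)$, stability, I would argue both directions uniformly. Suppose $(M,z)$ is stable. For each matched $u_iv_j \in M'$ corresponding to $uv \in M$, we get $x_{u_i} = z_{uv} \geq \alpha_u(M,z) = \alpha_{u_i}(M',x)$, where the last equality uses Lemma~\ref{lem: key}. For an unmatched copy $u_i$, the vertex $u$ must be unsaturated, so $\alpha_u(M,z) = 0$, and again by Lemma~\ref{lem: key} this forces $\alpha_{u_i}(M',x) = 0 = x_{u_i}$. Conversely, if $(M',x)$ is stable, the same identifications immediately yield $z_{uv} \geq \alpha_u(M,z)$ for every $uv \in M$; and for each unsaturated $u$ one selects any unmatched copy $u_i$ (which exists since $d_u < c_u$) and uses $0 = x_{u_i} \geq \alpha_{u_i}(M',x) = \alpha_u(M,z)$ to conclude $\alpha_u(M,z) = 0$.

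For part $(2)$, balance, I would first invoke part $(1)$ to reduce to the case where both solutions are stable. The balance condition on $uv \in M$ reads $z_{uv} - \alpha_u(M,z) = z_{vu} - \alpha_v(M,z)$, while on the associated edge $u_iv_j \in M'$ it reads $x_{u_i} - \alpha_{u_i}(M',x) = x_{v_j} - \alpha_{v_j}(M',x)$. The value correspondence gives $z_{uv} = x_{u_i}$ and $z_{vu} = x_{v_j}$, and Lemma~\ref{lem: key} gives $\alpha_u = \alpha_{u_i}$ and $\alpha_v = \alpha_{v_j}$; hence the two equations coincide term-by-term, and balance transfers in either direction.

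The main obstacle, really the only subtlety, is the treatment of the unmatched copies $u_i$ with $i \notin [d_u]$. Lemma~\ref{lem: key} is formally stated only for $i \in [d_u]$, but inspection of its proof shows that the argument uses only that $u_iv_j \in E(G') \setminus M'$ for the relevant neighbours $v_j$, which by construction holds for every $i \in [c_u]$ since all copies of $u$ share the same $E(G') \setminus M'$-neighbourhood. I would therefore open the proof by recording this trivial extension of Lemma~\ref{lem: key} to all $i \in [c_u]$ and then proceed with the bookkeeping above.
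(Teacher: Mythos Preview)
Your proposal is correct and follows essentially the same route as the paper: transfer the stability and balance conditions edge-by-edge via the bijection and Lemma~\ref{lem: key}, then handle unsaturated vertices of $G$ and unmatched copies in $G'$ separately. Your remark that Lemma~\ref{lem: key} must be extended to all $i \in [c_u]$ is well taken---the paper's own proof in fact invokes the lemma at index $d_u+1$ without comment, so your version is, if anything, the more careful of the two.
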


\begin{proof}  Let $uv \in M$. Suppose that $i= \sigma_v(u)$. Then $z_{uv} = x_{u_i} $ and 
using Lemma \ref{lem: key} we have
\begin{align}
z_{uv} &\geq \alpha_u(M,z) \quad \text{if and only if} \quad  x_{u_i} \geq  
\alpha_{u_{\sigma_v(u)}}\lb M', x \rb \notag . 
\end{align}
It remains to show that if $(M',x)$ is stable then $\alpha_u(M,z) = 0$ for any 
unsaturated vertices $u$ of $G$. Suppose $u$ is such a vertex. Then the vertex 
$u_{d_u+1}$ is not covered in $M'$ and therefore $x'_{d_u+1} = 0$. If $(M',x)$ is stable 
then $\alpha_{u_{d_u+1}} = 0$ and by Lemma \ref{lem: key} we have $\alpha_u(M,z) = 0$ as desired. 
This completes the proof of the first statement. To prove the second statement let  $uv \in M$ and suppose that $i= 
\sigma_v(u)$ and $j=\sigma_u(v)$. Then $z_{uv} = x_{u_i}$, $z_{vu} = x_{v_j}$ and by 
 Lemma \ref{lem: key} we have:
\begin{align}
z_{uv} - \alpha_u(M,z) =  z_{vu} - \alpha_v(M,z)\quad \Leftrightarrow \quad x_{u_i} - 
\alpha_{u_i}\lb M', x \rb = x_{v_j} - \alpha_{v_j}\lb M', x \rb. \notag
\end{align} 
This completes the proof. \qed
\end{proof}

\subsection{Algorithm for computing balanced solutions}\label{sec1}

Using Theorem \ref{thm: main} we have the following algorithm for finding a balanced solution to the network bargaining game $(G,w,c)$:

\begin{enumerate} 
\item Find a maximum $c$-matching $M$ in $G$.
\item Obtain unit capacity game $(G',w')$ with matching $M'$ using the construction from section \ref{const}.
\item Find a balanced solution $x$ on the matching $M'$ in $G'$.
\item Set $z= \phi(x)$ and return $(M,z)$.
\end{enumerate} 

We note that step 3 of the algorithm can be implementing using the existing polynomial time algorithm of Kleinberg and Tardos \cite{kleinberg2008balanced}. Given any instance of a network bargaining game with unit capacities together with a maximum weight matching, their algorithm returns a balanced solution on the given matching, whenever one exists.

\subsection{Existence of balanced solutions}\label{sec2}

Using Theorem \ref{thm: main} we know that stable solutions of the original problem map to stable solutions of the matching problem and viceversa. Since any stable solution must occur on a $c$-matching, respectively matching, of maximum weight we have the following corollary

\begin{corollary} Let $(G,w,c)$ be an instance of the network bargaining game and $M$ a  $c$-matching 
on $G$. Suppose the auxiliary instance $(G',w')$ and the matching $M'$ were obtained 
using the given construction. Then 
\begin{enumerate}
\item $M$ is a maximum weight $c$-matching for $(G,w,c)$ if and only if $M'$ is a maximum weight 
matching for $(G',w')$.
\item There exists a balanced solution for $(G,w,c)$ on the $c$-matching $M$ if and only if there exists a balanced 
solution for $(G',w')$ on the matching $M'$.
\end{enumerate}
\end{corollary}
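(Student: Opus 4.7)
The plan is to derive both parts of the Corollary from the bijection of solutions established in Lemma \ref{lem: 3} together with the preservation of stability and balance proved in Theorem \ref{thm: main}.

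Part $(2)$ follows immediately from Theorem \ref{thm: main}(2): Lemma \ref{lem: 3} provides a bijection $\phi$ between the solution set on $M$ in $(G,w,c)$ and the solution set on $M'$ in $(G',w')$, and under this bijection a solution is balanced if and only if its image is balanced. Hence a balanced solution on $M$ exists if and only if a balanced solution on $M'$ exists.

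For Part $(1)$, a direct computation on the construction yields the identity $w(M)=w'(M')$, since each $uv \in M$ contributes weight $w_{uv}$ to exactly one image edge of $M'$ and these are in bijection. For the implication ``$M'$ maximum $\Rightarrow M$ maximum'', I argue the contrapositive: given a $c$-matching $N$ of $G$ with $w(N)>w(M)$, I would build a matching $N^*$ of $G'$ with $w'(N^*)=w(N)$ by routing each $uv \in N$ either to its designated edge in $M'$ (when $uv \in M$) or to an unused pair of copies $u_i v_j$ (when $uv \notin M$). Feasibility of the routing follows from $|\delta_N(u)| \leq c_u$ together with the fact that the designated copies of $u$ used for edges in $N \cap M$ are distinct indices, leaving sufficiently many free copies of $u$ for the remaining non-$M$ edges of $N$ at $u$.

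The converse implication ``$M$ maximum $\Rightarrow M'$ maximum'' is the principal obstacle. The natural projection $N := \{ uv : u_i v_j \in N' \text{ for some } i,j \}$ of an arbitrary matching $N'$ of $G'$ back to $G$ is a $c$-matching, but its weight may strictly decrease under projection: an edge $uv \notin M$ can appear in $N'$ with multiplicity greater than one via distinct copies $u_i v_j$ and $u_k v_l$, and this excess weight is lost. The technical heart is to show that, when $M$ is maximum, no matching of $G'$ can exploit such duplications to attain weight greater than $w'(M')$. I would address this via an exchange argument on $N' \triangle M'$: whenever $N'$ contains a duplicated copy of some $uv \notin M$, the local optimality of $M$ (no alternating path or cycle through $uv$ strictly improves the weight of $M$) allows one to replace one of the duplicate edges in $N'$ by either the canonical $M'$-edge at a freed copy of $u$ or an edge of $G'$ to a different neighbour, without decreasing the total weight. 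Iterating eliminates all duplications and produces a duplicate-free matching of $G'$ of weight at least $w'(N')$, which projects injectively to a $c$-matching of $G$ of weight exceeding $w(M)$, contradicting the maximality of $M$.
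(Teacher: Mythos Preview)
Your handling of Part~(2) is correct and is exactly the paper's argument: the bijection of Lemma~\ref{lem: 3} together with Theorem~\ref{thm: main}(2) immediately gives the equivalence of balanced solutions on $M$ and on $M'$.

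For Part~(1) you take a genuinely different route from the paper. The paper does not argue combinatorially at all; it simply invokes Theorem~\ref{thm: main}(1) (stability is preserved in both directions) together with the fact that any stable solution must sit on a maximum-weight ($c$-)matching. That one-line route is much shorter than yours, though as literally written it only yields ``$\exists$ stable on $M \Leftrightarrow \exists$ stable on $M'$'', which strictly speaking gives $M$ max $\Leftrightarrow M'$ max only in the presence of a stable solution. Your direct approach is more explicit about what has to be checked.

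Your contrapositive for ``$M'$ max $\Rightarrow M$ max'' is fine: every $c$-matching $N$ of $G$ can indeed be routed to a matching of $G'$ of the same weight, since each vertex $u$ has exactly $c_u$ copies and all cross-copies are present for non-$M$ edges.

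The gap is in your argument for ``$M$ max $\Rightarrow M'$ max''. You correctly isolate the obstruction (a matching $N'$ of $G'$ may contain several parallel copies $u_iv_j$, $u_kv_l$ of the same non-$M$ edge $uv$), but the proposed exchange step is not substantiated. To ``replace one of the duplicate edges by the canonical $M'$-edge at a freed copy of $u$'' you need the \emph{other} endpoint of that $M'$-edge to be unmatched in $N'$, and nothing guarantees this. If it is matched, you are forced into a longer alternating walk in $N'\triangle M'$, and you must show that this walk terminates without creating new duplications or losing weight; the phrase ``local optimality of $M$'' does not by itself deliver that. As written, this direction is a plan rather than a proof, and it is precisely here that the paper's stability-based shortcut (when it applies) pays off by avoiding the combinatorics entirely.
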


It was previously shown in \cite{kleinberg2008balanced} that a unit capacity game possesses a balanced solution if and only if it has a stable solution, which in turn happens if and only if the linear program for the maximum weight matching of the underlying graph has an integral optimal solution. For the case of network bargaining game with general capacities, \cite{kanoria2009natural} have shown that a stable solution exists if and only if the linear program for the maximum weight $c$-matching of the underlying graph has an integral optimal solution. In terms of existence of balanced solutions, they only  obtain a partial characterization by proving  that if this integral optimum is unique, then a balanced solution is guaranteed to exist. Our results imply the following full characterization for the existence of balanced solutions, thus extending the results of \cite{kanoria2009natural}:

\begin{theorem} An instance $(G,w,c)$ of the network bargaining game has a balanced solution if and only if it has a stable one.
\end{theorem}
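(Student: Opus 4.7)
The direction \emph{balanced implies stable} is immediate from the definitions, since a balanced solution is required to be stable. All the work is in showing that a stable solution to $(G,w,c)$ implies the existence of a balanced one.

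Suppose $(G,w,c)$ admits a stable solution $(M,z)$. As noted earlier in the paper, any stable solution is supported on a maximum weight $c$-matching, so $M$ is itself such a matching. Apply the construction of Section~\ref{const} with this particular $M$ to obtain the unit-capacity auxiliary instance $(G',w')$ together with the matching $M'$. By part (1) of the Corollary, $M'$ is a maximum weight matching of $(G',w')$, and by part (1) of Theorem~\ref{thm: main} the image $(M',\phi^{-1}(z))$ is a stable solution of $(G',w')$. Thus the auxiliary unit-capacity game has a stable solution on the maximum weight matching $M'$.

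Now invoke the Kleinberg--Tardos result for unit capacity games \cite{kleinberg2008balanced}: such a game admits a balanced solution if and only if it admits a stable one, and in fact a balanced solution exists on every maximum weight matching whenever one exists at all (this is the content of the algorithmic guarantee quoted before the corollary, since their algorithm takes as input any maximum weight matching and returns a balanced solution on it). Applied to $(G',w')$ and $M'$, this yields a balanced solution $(M',x^{*})$. Setting $z^{*}=\phi(x^{*})$ and invoking part (2) of Theorem~\ref{thm: main}, we conclude that $(M,z^{*})$ is a balanced solution of the original game $(G,w,c)$, completing the proof.

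The one step that requires a little care is the last invocation of Kleinberg--Tardos: we need a balanced solution of $(G',w')$ that is supported on the specific matching $M'$, not on some other maximum weight matching of $G'$, since the bijection $\phi$ between $\mathcal{X}$ and $\mathcal{Z}$ is tied to the pair $(M,M')$ coming out of the construction. The algorithmic form of the Kleinberg--Tardos theorem delivers exactly this, given that $M'$ is maximum weight and the game has at least one stable (hence balanced) solution.
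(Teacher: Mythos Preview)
Your proof is correct and follows exactly the approach the paper intends: reduce to the auxiliary unit-capacity instance via the construction of Section~\ref{const}, carry stability across with Theorem~\ref{thm: main}(1) and the Corollary, invoke Kleinberg--Tardos to upgrade to a balanced solution on $M'$, and pull it back with Theorem~\ref{thm: main}(2). The paper states the theorem as a direct consequence of these results without spelling out the chain, so your write-up is in fact more explicit than the paper's own, including the observation that one needs the balanced solution on the specific matching $M'$.
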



\begin{thebibliography}{}

\bibitem[2008]{kleinberg2008balanced}
{\sc Kleinberg, J.} {\sc and} {\sc Tardos, {\'E}.} 2008.
Balanced outcomes in social exchange networks.
 In {\em Proceedings of the 40th annual ACM symposium on Theory of
  computing}. ACM, 295--304.

\bibitem[2010]{azar2010monotonicity}
{\sc Azar, Y.}, {\sc Devanur, N.}, {\sc Jain, K.}, {\sc and} {\sc Rabani, Y.}
  2010.
Monotonicity in bargaining networks.
In {\em Proceedings of the Twenty-First Annual ACM-SIAM Symposium on
  Discrete Algorithms}. Society for Industrial and Applied Mathematics,
  817--826.

\bibitem[2010]{bateni2010cooperative}
{\sc Bateni, M.}, {\sc Hajiaghayi, M.}, {\sc Immorlica, N.}, {\sc and} {\sc
  Mahini, H.} 2010.
 The cooperative game theory foundations of network bargaining games.
\newblock {\em Automata, Languages and Programming\/}, 67--78.

\bibitem[1983]{cook1983distribution}
{\sc Cook, K.}, {\sc Emerson, R.}, {\sc Gillmore, M.}, {\sc and} {\sc
  Yamagishi, T.} 1983.
 The distribution of power in exchange networks: Theory and
  experimental results.
 {\em American journal of sociology\/}, 275--305.

\bibitem[1992]{cook1992power}
{\sc Cook, K.} {\sc and} {\sc Yamagishi, T.} 1992.
 Power in exchange networks: A power-dependence formulation.
 {\em Social Networks\/}~{\em 14,\/}~3, 245--265.

\bibitem[1998]{faigle1998efficient}
{\sc Faigle, U.}, {\sc Kern, W.}, {\sc and} {\sc Kuipers, J.} 1998.
 An efficient algorithm for nucleolus and prekernel computation in
  some classes of tu-games.

\bibitem[2009]{kanoria2009natural}
{\sc Kanoria, Y.}, {\sc Bayati, M.}, {\sc Borgs, C.}, {\sc Chayes, J.}, {\sc
  and} {\sc Montanari, A.} 2009.
 A natural dynamics for bargaining on exchange networks.
 {\em arXiv preprint arXiv:0911.1767\/}.


\bibitem[1950]{nash1950bargaining}
{\sc Nash~Jr, J.} 1950.
 The bargaining problem.
{\em Econometrica: Journal of the Econometric Society\/}, 155--162.

\bibitem[1984]{rochford1984symmetrically}
{\sc Rochford, S.} 1984.
 Symmetrically pairwise-bargained allocations in an assignment market.
 {\em Journal of Economic Theory\/}~{\em 34,\/}~2, 262--281.

\bibitem[1971]{shapley1971assignment}
{\sc Shapley, L.} {\sc and} {\sc Shubik, M.} 1971.
 The assignment game i: The core.
 {\em International Journal of Game Theory\/}~{\em 1,\/}~1, 111--130.

\bibitem[1999]{willer1999network}
{\sc Willer, D.} 1999.
{\em Network exchange theory}.
Praeger Publishers.

\end{thebibliography}
\end{document}